\newcommand{\Tr}{\mathrm{Tr}}
\newcommand{\mc}[1]{\mathcal{#1}}
\newcommand{\cE}{\mathcal{E}}
\newcommand{\ud}{\,\mathrm{d}}
\newcommand{\Hxc}{\mathrm{Hxc}}
\newcommand{\abs}[1]{\lvert#1\rvert}
\newcommand{\norm}[1]{\lVert#1\rVert}
\newcommand{\ie}{\textit{i.e.},~}
\newcommand{\eg}{\textit{e.g.},~}
\newcommand{\Or}{\mathcal{O}}
\newcommand{\I}{\imath}
\newcommand{\RR}{\mathbb{R}}
\newcommand{\CC}{\mathbb{C}}
\global\long\def\Tr{\mathrm{Tr}}
\numberwithin{equation}{section}
\numberwithin{figure}{section}
\newtheorem{thm}{\protect\theoremname}
\newtheorem{rem}[thm]{\protect\remarkname}
\newtheorem{prop}[thm]{\protect\propositionname}
\providecommand{\corollaryname}{Corollary}
\providecommand{\lemmaname}{Lemma}
\providecommand{\propositionname}{Proposition}
\providecommand{\remarkname}{Remark}
\providecommand{\theoremname}{Theorem}
\newcommand{\REV}[1]{\textcolor{black}{#1}}
\newcommand{\CS}{\mathcal{C}}
\title{Projection based embedding theory for solving Kohn-Sham density functional theory}
\author{
Lin Lin\thanks{Department of Mathematics, University of California,
Berkeley, Berkeley, CA 94720 and Computational Research Division,
Lawrence Berkeley National Laboratory, Berkeley, CA 94720. Email:
\texttt{linlin@math.berkeley.edu}}\and Leonardo
Zepeda-N\'u\~nez\thanks{Computational Research Division, Lawrence Berkeley
National Laboratory, Berkeley, CA 94720. Email: \texttt{lzepeda@lbl.gov}}}
\begin{document}
\maketitle
\begin{abstract}
  Quantum embedding theories are playing an increasingly important role
  in bridging different levels of approximation to the many-body
  Schr\"odinger equation in physics, chemistry and materials science.
  In this paper, we present a linear algebra perspective of the recently
  developed projection based embedding theory (PET) [Manby et al,  J.
  Chem. Theory Comput. 8, 2564, 2012], restricted to the context of Kohn-Sham
  density functional theory.  By partitioning the global degrees of freedom
  into a ``system'' part and a ``bath'' part, and by
  choosing a proper projector from the bath, PET is an {\it in-principle}
  exact formulation to confine the calculation to the system part
  only, and hence can be performed with reduced computational cost.
  Viewed from the perspective of domain decomposition methods, one
  particularly interesting feature of PET is that it does not enforce a
  boundary condition explicitly, and remains applicable even when the
  discretized Hamiltonian matrix is dense, such as in the context of the
  planewave discretization. In practice, the accuracy of PET depends on
  the accuracy of the \REV{bath projector}. Based on the linear
  algebra reformulation, we develop a first order perturbation
  correction to the projector from the bath to improve its accuracy.
  Numerical results for real chemical systems indicate that with a
  proper choice of reference system \REV{used to compute the bath projector},
  the perturbatively corrected PET
  can be sufficiently accurate even when strong perturbation is applied
  to very small systems, such as the computation of the ground state
  energy of a SiH$_3$F molecule, using a SiH$_4$ molecule as the
  reference system.
\end{abstract}

\section{Introduction}

Multiphysics simulation usually involves two or more physical scales. In
the context of electronic structure theory, even though everything \REV{
on the scale of electrons and molecules
is described by the many-body Schr\"odinger equation,}
the concept \REV{behind} multiphysics
simulation remains valid. The direct solution to the
many-body Schr\"odinger equation itself is prohibitively expensive,
except for systems with a handful of electrons. This has led to the development
of various theoretical tools in both quantum physics and quantum chemistry to
find approximate solutions to the Schr\"odinger equation. These theories  can be
effectively treated as ``different levels of physics'' providing different levels of
accuracy. However, depending on the accuracy required, the
computational cost associated with such approximate theories can still
be very high.  So if a large quantum system can be partitioned into a
``system'' part containing the degrees of freedom that
are of interest that need to be treated using a \REV{relatively} accurate theory,
and a ``bath'' part containing the rest of the
degrees of freedom that can be treated using a less accurate theory, it becomes
naturally desirable to have a  numerical method that can
bridge the two levels of theories. In quantum physics, such
``multiscale'' methods have been actively developed in the past few
decades and are often called  ``quantum embedding theories''
(see \eg~\cite{BernholcLipariPantelides1978,ZellerDederichs1979,Cortona1991,KniziaChan2013,GoodpasterAnanthManbyEtAl2010,HuangPavoneCarter2011,GarciaLuE2007,ChenOrtner2015,WilliamsFeibelmanLang1982,KellyCar1992,
ZgidChan2011,KananenkaGullZgid2015,NguyenKananenkaZgid2016,ChibaniRenSchefflerEtAl2016,LiLinLu2018}
and ~\cite{SunChan2016} for a recent brief review).

The projection based embedding theory
(PET)~\cite{ManbyStellaGoodpasterEtAl2012} is a recently developed
quantum embedding theory, which is a versatile method that can be used
to couple a number of quantum theories together in a seamless fashion
(also see recent
works~\cite{LibischMarsmanBurgdorferEtAl2017,ChulhaiGoodpaster2018}).
This paper is a first step towards a mathematical understanding of PET. To make the discussions concrete, we assume that the system
part is described by the widely used Kohn-Sham density functional theory
(KSDFT)~\cite{HohenbergKohn1964,KohnSham1965}, and the bath part is also
described by KSDFT but solved only approximately.  Although this
setup is simpler than \REV{the one presented} in~\cite{ManbyStellaGoodpasterEtAl2012}, it
is already interesting from the perspective of approximate solution of large scale
eigenvalue problems, as to be detailed below.

After proper discretization, KSDFT can be written as the
following nonlinear eigenvalue problem
\begin{equation}\label{eqn:KSDFT}
  H[P] \Psi = \Psi \Lambda, \quad P=\Psi\Psi^{*},
\end{equation}
where \REV{the Hamiltonian} $H[P]\in \CC^{N \times N}$ is a Hermitian matrix,
and the diagonal matrix $\Lambda\in \RR^{N_{e}\times N_{e}}$ encodes the
algebraically lowest $N_e$ eigenvalues ($N\gg N_{e}$). $N$ is the number of
degrees of freedom of the Hamiltonian operator after discretization,
and $N_{e}$ is the number of electrons in the system (spin degrees of
freedom omitted). The eigenvectors associated with $\Lambda$ are denoted
by $\Psi=[\psi_{1},\ldots,\psi_{N_{e}}]\in \CC^{N \times N_{e}}$, and
$\Psi$ satisfies the orthonormality condition $\Psi^{*}\Psi=I_{N_{e}}$,
\REV{where $I_{N_{e}}$ is the identity of size $N_e$}.
The matrix $P$ is a spectral projector, usually called the {\it density
matrix}.  The Hamiltonian $H[P]$ depends
on the density matrix $P$ in a nonlinear fashion, and Eq.~\eqref{eqn:KSDFT} needs to
be solved self-consistently.

Without loss of generality, the system part can be defined
as the degrees of freedom associated with a set of indices
$\mc{I}_{s}$, and the bath part with a set of indices $\mc{I}_{b}$, so
that $\mc{I}_{s}\cup \mc{I}_{b}=\{1,\ldots,N\}$. \REV{Usually
$|\mc{I}_{b}|\gg |\mc{I}_{s}|$.} We are mostly interested in
the accurate computation of physical observables associated with the
system part, \ie the matrix block of the density matrix
$P_{\mc{I}_{s},\mc{I}_{s}}$. Since the eigenvalue
problem~\eqref{eqn:KSDFT} couples all degrees of freedom together,
this task still requires a relatively accurate description of the rest
of the density matrix.

In a nutshell, PET assumes the following decomposition of the density matrix
\begin{equation}
  P=P_{s}+P_{0,b},
  \label{eqn:Pdecomposition}
\end{equation}
in which $P_{s}$ is the density matrix corresponding to the system
part whose block corresponding to the bath part,
$(P_{s})_{\mc{I}_{b},\mc{I}_{b}}$, approximately vanishes. Similarly
$P_{0,b}$, called the \REV{bath} projector, is the density matrix from the bath part whose block
corresponding to the system part, $(P_{0,b})_{\mc{I}_{s},\mc{I}_{s}}$,
approximately vanishes. The decomposition of the system and
bath part is performed using projectors, thus leading
to the name of PET. Such a decomposition can be \textit{in-principle
exact}.
The subscript $0$ indicates that $P_{0,b}$ \REV{is computed from a \textit{reference}
system, thus only obtained approximately.}
Furthermore, $P_{s}$ is constrained by $P_{0,b}$
according to the
orthogonality condition
\begin{equation}
  P_{0,b} P_{s} = 0.
  \label{eqn:Porthogonal}
\end{equation}
The condition~\eqref{eqn:Porthogonal} acts as a soft ``boundary condition'' for a modified Kohn-Sham problem,
of which the number of eigenvectors to be computed can be much smaller than
$N_{e}$. Hence PET reduces the computational cost compared to
\REV{solving} \eqref{eqn:KSDFT} by reducing the number of eigenvectors and eigenvalues
to compute.

\noindent\textbf{Related works:}

From a practical perspective, PET can be seamlessly integrated into many
electronic structure software packages, given its alluring matrix-free
nature, \ie PET only requires matrix-vector multiplication operation of the form
$H\psi$. Thus, it can be applicable even
when $H$ is a dense matrix such as in the planewave discretization, or when
explicit access to $H$ is not readily available.

This is in contrast to, \eg the widely used Green's function embedding
methods (see, \eg
~\cite{BernholcLipariPantelides1978,ZellerDederichs1979,WilliamsFeibelmanLang1982,KellyCar1992,LiLinLu2018}),
\REV{where explicit access to $H$ is usually
required to compute Green's functions
of the form $G(z):=(z-H)^{-1}$. In addition, when a large basis set such as a planewave
discretization is used, even storing the Green's functions can be challenging.
However, when a small basis set is used, and $H$ is a sparse matrix,
Green's function embedding methods can be combined with fast
algorithms~\cite{LiLinLu2018} to yield a lower computational complexity than
that of PET. It may also perform better for systems with small
gaps.}

\noindent\textbf{Contribution:}

The contribution of this paper is two-fold:  First,  we provide a
mathematical understanding of PET from a linear algebra perspective,
which can be concisely stated as an energy minimization problem with \REV{extra
orthogonality constraints}. The corresponding Euler-Lagrange equation from the
energy minimization problem gives rise to a modified Kohn-Sham problem,
and the original PET formulation can be understood as a penalty method for
implementing the \REV{extra orthogonality constraint~\eqref{eqn:Porthogonal}}.
We then extend the formalism to the nonlinear case as
in KSDFT.

\REV{Second, we found that the standard perturbation analysis can not be applied
directly to PET. However, through a proper choice of the basis set, it is possible to
reformulate PET in a form suitable for such analysis, which allows us to compute a
perturbative correction. In addition, we show that such correction only contributes
to the bath part\footnote{\REV{We refer readers to the main text
(in particular, Section~\ref{sec:perturbationlinear})
for the formula of the aforementioned perturbation.}}.}

Our numerical results for real chemical systems confirm the
effectiveness of the method. In particular, we
find that the method can reach below the chemical accuracy (1 kcal/mol, or
$0.0016$ au) even
when applied to very small systems, such as the computation of the
ground state energy of a SiH$_3$F molecule from the reference of a
SiH$_4$ molecule. We also demonstrate the accuracy of the energy and the
atomic force for the PET and the perturbatively corrected PET using
other molecules such as benzene and anthracene.

\noindent\textbf{Organization:}

This paper is organized as follows. We derive PET for linear
problems in Section~\ref{sec:linear}, and introduce the first order
perturbative correction to PET in Section~\ref{sec:perturbationlinear}.
We then generalize the discussion to
nonlinear problems in Section~\ref{sec:nonlinear}. We discuss the
strategy to evaluate the \REV{bath} projector using localization methods in
Section~\ref{sec:implementation}.  We then present the numerical results
in Section~\ref{sec:numer}, followed by the conclusion and discussion in
Section~\ref{sec:conclusion}.

\section{PET for linear problems}\label{sec:linear}

We first introduce PET in the context of solving a linear eigenvalue problem.
Let $H \in \CC^{N\times N}$ be a Hermitian matrix, whose eigenvalues
are ordered non-decreasingly as $\lambda_1 \leq \lambda_2 \leq \ldots
\leq \lambda_{N_e} < \lambda_{N_e+1} \leq \ldots \leq \lambda_{N}$. Here
we assume that there is a positive energy gap $\Delta_{g} =
\lambda_{N_e+1}-\lambda_{N_{e}}$.

Consider the following energy minimization problem
\begin{equation}
  E = \underset{\scriptsize\begin{array}{c} P^2 = P, P^* = P \\ \Tr P = N_e \end{array} }{\inf} \cE [P],
  \label{eq:energy_linear}
\end{equation}
where $E$ is called the energy, and the energy functional $\cE[P]$ is defined as
\begin{equation}
  \cE[P] := \Tr [HP].
\end{equation}
Note that the condition $P=P^2$ requires $P$ to be a projector, with
eigenvalues being either $0$ or $1$. The trace condition
ensures that there are precisely $N_{e}$ eigenvalues that are equal to
$1$. Proposition~\ref{prop:linearks} states that the minimizer is attained by
solving a linear eigenvalue problem.  This is a well known result in
linear algebra; nonetheless, we provide its proof here in order to motivate the
derivation for PET later.

\begin{prop}\label{prop:linearks}
  \REV{Let $H \in \CC^{N\times N}$ be a Hermitian matrix and} assume that there is a positive gap between the $N_{e}$-th and
  $(N_{e}+1)$-th eigenvalue of $H$.
  Then the variational problem~\eqref{eq:energy_linear} has a unique
  minimizer, \REV{ denoted by $P$}, which is given by the solution to the following linear
  eigenvalue problem
  \begin{equation}
    H \Psi = \Psi \Lambda, \quad P=\Psi\Psi^{*}.
    \label{eqn:linear}
  \end{equation}
  Here $(\Psi,\Lambda)$ are the lowest $N_{e}$ eigenpairs of $H$.
\end{prop}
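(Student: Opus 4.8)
The plan is to follow the classical variational characterization of the lowest eigenspace, but organized so that it also exposes the Euler--Lagrange structure that will be reused in the PET construction. First I would record the convenient reparametrization: every feasible $P$ can be written as $P = \Phi\Phi^{*}$ with $\Phi \in \CC^{N\times N_e}$ and $\Phi^{*}\Phi = I_{N_e}$, and conversely every such $\Phi$ gives a feasible $P$; moreover $\cE[P] = \Tr[\Phi^{*}H\Phi]$ depends only on $\Ran(\Phi) = \Ran(P)$, since $\Phi \mapsto \Phi U$ with $U$ unitary leaves both $P$ and the trace invariant. Existence of a minimizer is then immediate: the feasible set of rank-$N_e$ Hermitian projectors is a compact subset of $\CC^{N\times N}$ (a continuous image of the compact Stiefel manifold $\{\Phi : \Phi^{*}\Phi = I_{N_e}\}$) and $P \mapsto \Tr[HP]$ is linear, hence continuous.

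Next I would derive the first-order optimality condition. The feasible set is a smooth manifold; differentiating the constraints $P = P^{*}$ and $P^{2} = P$ along a smooth feasible curve through $P$ shows that the admissible variations $\delta P$ are exactly the Hermitian matrices satisfying $P\,\delta P + \delta P\,P = \delta P$, equivalently those with vanishing ``diagonal blocks'', $\delta P = P\,\delta P\,(I-P) + (I-P)\,\delta P\,P$. Writing $H$ in block form relative to $P$ and computing $\Tr[H\,\delta P]$ for such $\delta P$ yields $\Tr[H\,\delta P] = 2\,\Re\,\Tr\big[(I-P)H P\,\delta P\big]$, so stationarity forces $(I-P)HP = 0$; taking adjoints gives $PH(I-P)=0$, hence $HP = PH$. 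Therefore $\Ran(P)$ is an $H$-invariant subspace, i.e.\ it is spanned by $N_e$ orthonormal eigenvectors of $H$ with indices in some set $S$, $|S| = N_e$, and $\cE[P] = \sum_{i\in S}\lambda_{i}$.

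It then remains to minimize over $S$ and to confirm uniqueness. Among size-$N_e$ index sets, $\sum_{i\in S}\lambda_i$ is minimized by $S=\{1,\dots,N_e\}$, and the strict gap $\lambda_{N_e} < \lambda_{N_e+1}$ guarantees that any $S$ containing an index $>N_e$ can be strictly improved by a swap, so the minimizing \emph{subspace} is the spectral subspace of $H$ for $\{\lambda_1,\dots,\lambda_{N_e}\}$ and $P = \Psi\Psi^{*}$ with $(\Psi,\Lambda)$ the lowest $N_e$ eigenpairs. To keep the argument self-contained I would also add the matching global lower bound valid for \emph{every} feasible $P$, not just critical points: diagonalizing $H = V\,\mathrm{diag}(\lambda_i)\,V^{*}$ and setting $d_i := (V^{*}PV)_{ii}\in[0,1]$ with $\sum_i d_i = \Tr P = N_e$ gives $\cE[P] = \sum_i \lambda_i d_i \ge \sum_{i=1}^{N_e}\lambda_i$ by a one-line rearrangement estimate, with the equality case again pinned down by the gap. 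Either route delivers uniqueness.

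The one genuinely delicate point is the uniqueness claim when $H$ has repeated eigenvalues: one must verify that the strict gap $\lambda_{N_e} < \lambda_{N_e+1}$ alone --- without any simplicity of individual eigenvalues --- already forces the minimizing invariant subspace (equivalently, the optimal $(d_i)$ in the lower-bound argument) to be the spectral subspace associated with $\{\lambda_1,\dots,\lambda_{N_e}\}$, which is well defined and exactly $N_e$-dimensional precisely because of that gap. I expect this bookkeeping, together with the correct identification of admissible variations $\delta P$, to be where the proof needs care; everything else is routine. I would close by noting that the identity $HP = PH$ obtained from stationarity is the linear prototype of the modified Kohn--Sham equation that the PET formulation produces once the extra orthogonality constraint \eqref{eqn:Porthogonal} is appended.
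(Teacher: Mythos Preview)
Your argument is correct, but the primary route differs from the paper's. The paper works entirely in the eigenbasis of $H$: writing $\hat{P}=\hat{\Psi}^{*}P\hat{\Psi}$, the energy becomes $\sum_i \lambda_i \hat{P}_{ii}$, which (using the gap) is minimized only when $\hat{P}_{ii}=1$ for $i\le N_e$ and $\hat{P}_{ii}=0$ otherwise; then, since $\hat{P}$ is a projector with $0\le\hat{P}\le I$, a diagonal entry equal to $0$ or $1$ forces the corresponding $e_i$ to be an eigenvector of $\hat{P}$, so $\hat{P}$ is diagonal and $P$ is unique. Your ``global lower bound'' paragraph is precisely this argument, but you relegate it to a backup; your main line instead derives the stationarity condition $(I-P)HP=0$, hence $[H,P]=0$, and then optimizes over $H$-invariant subspaces. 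The stationarity route is longer but makes the Euler--Lagrange structure explicit---which, as you note, is the template reused for the modified eigenvalue problem in PET---whereas the paper's argument is shorter and needs no manifold calculus. One small point: in your lower-bound route, the gap only pins down the diagonal entries $d_i$; you still owe one line (positive semidefiniteness of $\hat{P}$ and $I-\hat{P}$, or equivalently the paper's observation that a projector whose $i$-th diagonal entry is $0$ or $1$ has $e_i$ as an eigenvector) to pass from the diagonal to the full matrix and conclude uniqueness of $P$.
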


\begin{proof}
  Since $H$ is a Hermitian matrix, it can be diagonalized as
  \begin{equation}
    H = \hat{\Psi} \hat{\Lambda} \hat{\Psi}^{*}.
  \end{equation}
  Here $\hat{\Lambda}=\text{diag}[\lambda_{1},\ldots,\lambda_{N}]\in \RR^{N\times N} $ is a diagonal matrix
  containing all the eigenvalues of $H$ ordered non-decreasingly, and
  $\hat{\Psi}\in \CC^{N\times N}$ is a unitary matrix with its first
  $N_{e}$ columns given by $\Psi$.
  Then
  \begin{equation}
    \cE[P] = \Tr[HP] =  \Tr[\hat{\Psi} \hat{\Lambda} \hat{\Psi}^* P] = \Tr[\hat{\Lambda} \hat{\Psi}^*
    P \hat{\Psi}] = \Tr[\hat{\Lambda} \hat{P}] = \sum_{i =1}^{N}
    \lambda_i \hat{P}_{ii}=: \hat{\cE}[\hat{P}],
  \end{equation}
  where $\hat{P} = \hat{\Psi}^* P \hat{\Psi}$ is the density matrix
  with respect to the basis given by $\hat{\Psi}$.
  Thus, Eq.~\eqref{eq:energy_linear} is equivalent to
  \begin{equation}
    E = \underset{\scriptsize\begin{array}{c} \hat{P}^2 = \hat{P},
      \hat{P}^* = \hat{P} \\ \Tr \hat{P} = N_e \end{array} }{\inf} \hat{\cE} [\hat{P}].
      \label{eq:energy_rotated}
  \end{equation}
  Since $\lambda_{N_e+1}-\lambda_{N_{e}}>0$, the minimizer is achieved by setting
  \begin{equation}
    \hat{P}_{ii} = \left \{ \begin{array}{ll} 1, & \text{if } i \leq N_e, \\
      0, & \text{if } i > N_e. \end{array} \right.
  \end{equation}
  Finally, given that $\hat{P}$ is an idempotent matrix, we have
  that all its eigenvalues are either $0$ or $1$. \REV{Since
  $\hat{P}$ is a projection operator, its eigenvalues are
  bounded between $0$ and $1$. Since each diagonal entry $\hat{P}_{ii}$
  is already $1$ or $0$, $\hat{P}_{ii}$ is an eigenvalue. The
  corresponding eigenvector is $e_{i}$,  the $i$-th column of the
  identity matrix. }
  Thus $\hat{P}$ is a diagonal matrix, with ones and zeros at the main diagonal, \ie
  \begin{equation}
    \hat{P}_{ij} = \left \{ \begin{array}{ll} 1, & \text{if } i=j \text{ and } i \leq N_e, \\
      0, & \text{otherwise}. \end{array} \right.
  \end{equation}
  This is the unique minimizer.  Thus
  \begin{equation}
    P = \hat{\Psi} \hat{P} \hat{\Psi}^* = \Psi\Psi^{*}.
  \end{equation}
  is the unique minimizer of \eqref{eq:energy_linear}, where $\Psi$ is
  given by the first $N_{e}$ columns of $\hat{\Psi}$.
\end{proof}


When $N$ and $N_{e}$ are large, the solution of the linear eigenvalue
problem~\eqref{eqn:linear} can be expensive. However, if we have
already solved the eigenvalue problem for a reference matrix
$H_{0}$, and we would like to solve the eigenvalue problem for another
matrix $H$ \REV{such} that $H-H_{0}$ is approximately zero outside the matrix
block given by the index set $\mc{I}_{s}$. In such a case, PET aims at
reducing the computational cost by solving a modified eigenvalue problem
that involves a much smaller number of eigenvectors.

More specifically, for a reference system $H_0\in \CC^{N\times N}$, let
$P_0$ be the minimizer of the following problem
\begin{equation}
  E_0 = \underset{\begin{array}{c} P^2 = P, P^* = P \\ \Tr P
    = N_e^{0} \end{array} }{\inf} \Tr[H_0 P].
\end{equation}
We split the minimizer as
\begin{equation}
  P_0 = P_{0,b} + P_{0,s}.
  \label{eqn:P0ansatz}
\end{equation}
\REV{Here $P_{0,b}$ and $P_{0,s}$ are called system and bath projector, respectively,
and are projectors themselves, \ie}
\begin{equation}
  P_{0,b}^2=P_{0,b}, \quad  P_{0,s}^2=P_{0,s}.
  \label{eqn:proj_ref}
\end{equation}
The rank of $P_{0,b}$ is denoted by $N_{b}:=\Tr
P_{0,b}$.  \REV{Here we use the symbol $N_{b}$ instead of $N_{0,b}$ to
emphasize that the rank of the bath projector $P_{0,b}$ remains the same
before and after the perturbation, as will be seen in the discussion later.}
We assume that $N_{b}\approx N_{e}^{0}$, and hence the rank of
$P_{0,s}$ is much smaller than $N_{b}$.  The splitting
procedure~\eqref{eqn:P0ansatz} is by no means unique; we will discuss
one possible method based on localization techniques to choose
$\Psi_{0,b}$ in Section~\ref{sec:implementation}.

Together with $P_0^2 = P_0$, we have
\begin{equation}
  P_0^2 =  (P_{0,b} + P_{0,s})^2 = P_{0,b}^2 + P_{0,s}^2 +
  P_{0,b}P_{0,s} + P_{0,s}P_{0,b} = P_{0,b} + P_{0,s}.
  \label{}
\end{equation}
\REV{Using Eq.~\eqref{eqn:proj_ref}, we have $P_{0,b}P_{0,s} +
P_{0,s}P_{0,b} = 0$.  Then
\[
P_{0,b}P_{0,s}P_{0,s} + P_{0,s}P_{0,b} P_{0,s} =
P_{0,b}P_{0,s}(I+P_{0,s})=0.
\]
Since $I+P_{0,s}$ is invertible, we arrive at the orthogonality
condition $P_{0,b}P_{0,s} = 0$.
}
It is also convenient to write
\begin{equation}
  P_{0,b}=\Psi_{0,b}\Psi_{0,b}^{*}, \quad
  \Psi_{0,b}^{*}\Psi_{0,b}=I_{N_{b}}.
  \label{}
\end{equation}
By proper rotation\footnote{This can be achieved by solving the
eigenvalue problem $\left (\Psi_{0,b}^*H_0\Psi_{0,b}\right ) C_b = C_b
\Lambda_{0,b}$, and redefining $\Psi_{0,b}$ to be $\Psi_{0,b} C_b$.} of the
$\Psi_{0,b}$ matrix, without loss of generality we may assume that
\begin{equation}
  \Psi_{0,b}^{*}H_{0}\Psi_{0,b} := \Lambda_{0,b}
  \label{eqn:Psi0b}
\end{equation}
is a diagonal matrix.
  We define
$\mathcal{B}_{0}:=\text{span}\{\Psi_{0,b}\}$,
with its orthogonal complement denoted by $\mathcal{B}_{0}^{\perp}$.

The main ansatz in PET is that the density matrix $P$ can be split
as
\begin{equation}
  P = P_{0,b} + P_s, \label{eq:ansatz}
\end{equation}
where $P_{s}^2=P_{s}$ is also a projector, and $P_{0,b}$ is \REV{
a bath} projector as in~\eqref{eqn:P0ansatz}. Similar to the
discussion above, we arrive at the orthogonality
condition~\eqref{eqn:Porthogonal}. Since the rank of $P_{0,b}$ is
already $N_{b}$, the rank of $P_{s}$ is thus equal to
$N_{s}:=N_{e}-N_{b}$, and we expect that $N_{s}\ll N_{b}$. Note that the
dimension of $H_{0}$ and $H$ must be the same, but $N_{e}^{0}$ and
$N_{e}$ can be different. Thus the ranks of $P_{0,s}$ and $P_{s}$ can
also be different. This is necessary in the context of KSDFT, where the
system part can involve different numbers and/or types of atoms
from that \REV{in the reference system}.

With the \REV{bath} projector fixed PET, solves the following constrained
minimization problem only with respect to $P_{s}$:
\begin{equation}
  E^{\text{PET}} = \underset{\begin{array}{c} P_s^2 = P_s,
    P_s^* = P_s \\ P_{0,b}P_s = 0,  \Tr P_s = N_s \end{array} }{\inf}
    \Tr [ H(P_s + P_{0,b})]. \label{eq:energy_pet_linear}
\end{equation}
Compared to~\eqref{eq:energy_linear}, we find that PET restrains the
feasibility set of density matrices to those satisfying the
ansatz~\eqref{eq:ansatz}. Hence by the variational principle
$E^{\text{PET}} \ge E$ provides an upper bound of the energy.
Parallel to Proposition~\ref{prop:linearks}, the minimizer
of~\eqref{eq:energy_pet_linear} is uniquely obtained by a modified
linear eigenvalue problem. This is given in
Proposition~\ref{prop:linearpet}.

\begin{prop}[Projection based embedding]\label{prop:linearpet}
  Let $H\vert_{\mathcal{B}_{0}^{\perp}}$ be the restriction of
  $H$ to the subspace $\mathcal{B}_{0}^{\perp}$, and assume that there
  is a positive gap between the $N_{s}$-th and $(N_{s}+1)$-th eigenvalue
  of $H\vert_{\mathcal{B}_{0}^{\perp}}$.
  Then the variational problem~\eqref{eq:energy_pet_linear} has a unique
  minimizer, denoted by $P_{s}$, which is given by the solution to the following linear
  eigenvalue problem
  \begin{equation}
    H\vert_{\mathcal{B}_{0}^{\perp}}\Psi_{s} = \Psi_{s} \Lambda_{s}, \quad
    P_{s}=\Psi_{s}\Psi_{s}^{*}.
    \label{eqn:linear_pet}
  \end{equation}
  Here $(\Psi_{s},\Lambda_{s})$ are the lowest $N_{s}$ eigenpairs of
  $H\vert_{\mathcal{B}_{0}^{\perp}}$.
\end{prop}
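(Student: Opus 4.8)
The plan is to reduce the constrained problem~\eqref{eq:energy_pet_linear} to the unconstrained energy minimization already settled in Proposition~\ref{prop:linearks}, applied to the compressed Hamiltonian $H\vert_{\mathcal{B}_0^\perp}$. The mechanism is that the orthogonality constraint $P_{0,b}P_s=0$ is exactly the statement that $\Ran(P_s)\subseteq\mathcal{B}_0^\perp$.

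First I would note that, since $P_s$ and $P_{0,b}$ are both Hermitian, taking adjoints in $P_{0,b}P_s=0$ gives $P_sP_{0,b}=0$ for free; writing $Q:=I-P_{0,b}$ for the orthogonal projector onto $\mathcal{B}_0^\perp$, the constraint is therefore equivalent to $P_s=QP_sQ$, i.e. $P_s$ is a Hermitian idempotent supported on $\mathcal{B}_0^\perp$. Conversely, every rank-$N_s$ orthogonal projector onto an $N_s$-dimensional subspace of $\mathcal{B}_0^\perp$ is feasible. Hence the feasible set of~\eqref{eq:energy_pet_linear} is precisely the set of rank-$N_s$ orthogonal projectors on the $(N-N_b)$-dimensional space $\mathcal{B}_0^\perp$; it is nonempty because $N_s=N_e-N_b\le N-N_b$.

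Next, since $\Tr[HP_{0,b}]$ is independent of $P_s$, minimizing $\Tr[H(P_s+P_{0,b})]$ over this set is the same as minimizing $\Tr[HP_s]$, and for feasible $P_s$ cyclicity of the trace gives $\Tr[HP_s]=\Tr[HQP_sQ]=\Tr[(QHQ)P_s]$. Identifying $QHQ$, restricted to act on the subspace $\mathcal{B}_0^\perp$, with the Hermitian matrix $H\vert_{\mathcal{B}_0^\perp}$ --- concretely, in an orthonormal basis $\Psi_{0,b}^\perp$ of $\mathcal{B}_0^\perp$ it is $(\Psi_{0,b}^\perp)^*H\Psi_{0,b}^\perp$ --- the PET problem becomes, up to the additive constant $\Tr[HP_{0,b}]$, the minimization~\eqref{eq:energy_linear} for $H\vert_{\mathcal{B}_0^\perp}$ on a space of dimension $N-N_b$ with $N_s$ occupied states. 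Invoking Proposition~\ref{prop:linearks}, whose hypothesis is exactly the assumed gap between the $N_s$-th and $(N_s+1)$-th eigenvalues of $H\vert_{\mathcal{B}_0^\perp}$, yields a unique minimizer $P_s=\Psi_s\Psi_s^*$ with $\Psi_s$ the lowest $N_s$ eigenvectors of $H\vert_{\mathcal{B}_0^\perp}$, which is~\eqref{eqn:linear_pet}; the optimal value is then $E^{\text{PET}}=\Tr[(H\vert_{\mathcal{B}_0^\perp})P_s]+\Tr[HP_{0,b}]=\sum_{i=1}^{N_s}(\Lambda_{s})_{ii}+\Tr[HP_{0,b}]$.

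The only point requiring care is this last identification: one must be explicit that traces taken over $\CC^N$ of operators supported on $\mathcal{B}_0^\perp$ coincide with traces taken over $\mathcal{B}_0^\perp$ itself, and that the gap hypothesis in the statement is phrased precisely so that Proposition~\ref{prop:linearks} applies verbatim to $H\vert_{\mathcal{B}_0^\perp}$. Once this bookkeeping is in place there is no new estimate to make; the result is a corollary of Proposition~\ref{prop:linearks}.
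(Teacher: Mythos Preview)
Your proposal is correct and follows essentially the same approach as the paper: both reduce the constrained problem to an unconstrained minimization on $\mathcal{B}_0^\perp$ via the identity $\Tr[HP_s]=\Tr[(I-P_{0,b})H(I-P_{0,b})P_s]$ (your $QHQ$), and then use the gap assumption to pick out the lowest $N_s$ eigenpairs. The only cosmetic difference is that the paper re-derives the diagonalization argument of Proposition~\ref{prop:linearks} inline, whereas you invoke that proposition as a black box; your version is slightly more economical but not a different route.
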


\begin{proof}
First, the orthogonality condition $P_{0,b}P_s = 0$ implies that all
columns of $P_{s}$ should be in the subspace $\mathcal{B}_{0}^{\perp}$.
Using the relation
\begin{align*}
  \Tr [ (I - P_{0,b})H(I - P_{0,b}) P_s ] & = \Tr [ H P_s - H P_{0,b}P_s - P_{0,b} H P_s + P_{0,b}HP_{0,b}P_s ], \\
                                         & = \Tr [ H P_s - H P_{0,b}P_s - H P_s P_{0,b} + HP_{0,b}P_sP_{0,b} ], \\
                                         & = \Tr [ H P_s ],
\end{align*}
we find that \eqref{eq:energy_pet_linear} is equivalent to the following minimization problem:
\begin{equation}
  E^{\text{PET}} = \underset{\begin{array}{c} P_s^2 = P_s, P_s^* = P_s \\
    P_{0,b}P_s = 0,  \Tr P_s = N_s \end{array} }{\inf} \Tr [ (I -
    P_{0,b})H(I - P_{0,b}) P_s] +\Tr [H P_{0,b}].
\end{equation}
Given that $\Tr [H P_{0,b}]$ is a constant, we only need to focus on the
first term  $\Tr [(I - P_{0,b})H(I - P_{0,b}) P_s]$. The matrix $(I -
P_{0,b})H(I - P_{0,b})$, \REV{sometimes} called the Huzinaga operator in
the quantum chemistry literature~\cite{Huzinaga1971}, is
Hermitian and is identical to $H$ when restricted to the subspace
$\mathcal{B}_{0}^{\perp}$. With some abuse of notation,
$H\vert_{\mathcal{B}_{0}^{\perp}}$ can be diagonalized as
\begin{equation}
  H\vert_{\mathcal{B}_{0}^{\perp}} = \hat{\Psi} \hat{\Lambda} \hat{\Psi}^{*}.
  \label{eqn:Hperpdecomp}
\end{equation}
Since the dimension of $\mathcal{B}_{0}^{\perp}$ is $N-N_{b}$,
$\hat{\Lambda}=\text{diag}[\hat{\lambda}_{1},\ldots,\hat{\lambda}_{N-N_{b}}]\in \RR^{(N-N_{b})\times (N-N_{b})} $ is a diagonal matrix
containing all the eigenvalues of $H\vert_{\mathcal{B}_{0}^{\perp}}$ ordered non-decreasingly, and
$\hat{\Psi}\in \CC^{N\times (N-N_{b})}$ is given by orthogonal columns
of \REV{an} unitary matrix in the subspace $\mathcal{B}_{0}^{\perp}$. Then
\begin{align*}
  \Tr [ (I - P_{0,b})H(I - P_{0,b}) P_s ] & =
  \Tr[\hat{\Psi} \hat{\Lambda} \hat{\Psi}^* P_{s}] = \Tr[\hat{\Lambda} \hat{\Psi}^*
  P_{s} \hat{\Psi}] \\
  &= \Tr[\hat{\Lambda} \hat{P}_{s}] =
  \sum_{i=1}^{N-N_{b}}
  \hat{\lambda}_i \hat{P}_{ii}.
\end{align*}
Here $\hat{P}_{s}$ is the matrix representation of $P_{s}$ with respect
to the basis $\hat{\Psi}$ of the subspace $\mathcal{B}_{0}^{\perp}$.

Thus similar to the proof of Proposition~\ref{prop:linearks}, we
arrive at the following minimization problem
\begin{equation}
  \label{eqn:PET_minimize}
  E^{\text{PET}} = \underset{\begin{array}{c}
    \hat{P}_{s}^2 = \hat{P}_{s}, \hat{P}_{s}^* =
    \hat{P}_{s} \\ \Tr \hat{P}_{s} = N_s \end{array} }{\inf} \Tr
    [ \hat{\Lambda} \hat{P}_{s}] +\Tr [H P_{0,b}],
\end{equation}
whose minimizer is given by
\begin{equation}
    (\hat{P}_{s})_{ij} = \left \{ \begin{array}{ll} 1, & \text{if } i=j
      \text{ and } i \leq N_s, \\ 0, & \text{otherwise}, \end{array}
      \right.
\end{equation}
and
\begin{equation}
  P_s = \hat{\Psi} \hat{P}_{s} \hat{\Psi}^* =
  \Psi_{s}\Psi_{s}^{*}.
\end{equation}
Here $\Psi_{s}$ are the first $N_{s}$ columns of $\hat{\Psi}$
corresponding to the lowest $N_{s}$ eigenvalues.
\end{proof}

\REV{We note that, even if $H_{0}$ and $H$ have a positive energy gap, it
may not be necessarily the case for  $H\vert_{\mathcal{B}_{0}^{\perp}}$. Therefore we need
to explicitly make this assumption in the proposition.}

\REV{Furthermore, we point out that $\{\hat{\lambda}_{i}\}$, the eigenvalues of
$H\vert_{\mathcal{B}_{0}^{\perp}}$, are not, in general, a subset of $\{\lambda_{i}\}$, the eigenvalues of $H$. Nonetheless,
according to Eq.~\eqref{eqn:PET_minimize}, $E^{\text{PET}}$ can be
computed in terms of the trace
\[
E^{\text{PET}} = \Tr[ H(P_s + P_{0,b})]=\sum_{i=1}^{N_{s}} \hat{\lambda}_{i}+\Tr [H P_{0,b}],
\]
which yields an upper bound to the energy $E$.}

\REV{In addition, when computing $\Psi_{s}$, all the vectors $\Psi_{0,b}$ lie
in the null space of $(I - P_{0,b})H(I - P_{0,b})$ which do not belong to the
range of $H\vert_{\mathcal{B}_{0}^{\perp}}$, thus they should be avoided
in the computation}.  This issue becomes noticeable when
$\hat{\lambda}_{N_{s}}>0$, and it would be incorrect to simply select
the first $N_{s}$ eigenpairs of $(I - P_{0,b})H(I - P_{0,b})$.  One
practical way to get around this problem is to add a negative shift
$c$, so that all the first $N_{s}$ eigenvalues of the matrix $(I -
P_{0,b})(H+cI)(I - P_{0,b})$ become negative.

This issue can also be automatically taken care of
by applying the projector $I - P_{0,b}$ to the computed eigenvectors
in an iterative solver, so that the computation is restricted to the
subspace of interest $\mathcal{B}_{0}^{\perp}$.

\begin{rem}
  The original formulation of PET~\cite{ManbyStellaGoodpasterEtAl2012} can be understood as a
  penalty formulation to implement the orthogonality constraint, \ie
  \begin{equation}
    E^{\text{PET},\mu} = \underset{\begin{array}{c} P_s^2 =
      P_s, P_s^* = P_s \\
      \Tr P_s = N_s \end{array} }{\inf} \Tr [ H(P_s
      + P_{0,b})] + \mu \Tr[P_{0,b}P_s]. \label{eq:energy_pet_penalty}
  \end{equation}
  This advantage of the penalty formulation is that the domain of
  $P_{s}$ has the same form as that in Proposition~\ref{prop:linearks}
  but with a modified energy functional.  The corresponding
  Euler-Lagrange equation is given by the eigenvalue problem for the
  matrix $H + \mu P_{0,b}$, and $P_{s}$ is the density matrix
  corresponding to the first $N_{s}$ eigenpairs. Therefore by selecting
  the penalty $\mu$ to be sufficiently large (in practice it is set to
  $10^{6}$ or larger),  the orthogonality condition is approximately
  enforced.
\end{rem}

\section{Perturbative correction to PET for linear problems}\label{sec:perturbationlinear}
\REV{
In the following we define,
\begin{equation}
  \delta H := H - H_0,
\end{equation}
and the PET projector,
\begin{equation}\label{eqn:P_PET}
  P^{\text{PET}} = P_s+P_{0,b},
\end{equation}
where $P_s$ is given by the solution of \eqref{eqn:linear_pet}.}
\subsection{Consistency}
First, we would like to verify that PET is a consistent theory: when $H=H_{0}$
and $N_{e}=N_{e}^{0}$, for any choice of the bath projector $P_{0,b}$, the
minimizers from~\eqref{eq:energy_linear}
and~\eqref{eq:energy_pet_linear} should yield the same density matrix.
This is ensured by Proposition~\ref{prop:consistentpet}.

\begin{prop}[Consistency of PET]\label{prop:consistentpet}
  When $H = H_0$ and $N_{e}=N_{e}^{0}$, the solution to PET
  satisfies $P_{s}=P_{0,s}$.
\end{prop}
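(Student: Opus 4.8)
The plan is to avoid any fresh computation and instead read the claim off from the variational statements already established. First I would observe that, under the hypotheses $H=H_0$ and $N_e=N_e^0$, the reference ``system'' projector $P_{0,s}$ is itself an admissible competitor in the PET minimization~\eqref{eq:energy_pet_linear}: it is an orthogonal projector by~\eqref{eqn:proj_ref}; taking adjoints in the orthogonality relation $P_{0,b}P_{0,s}=0$ derived above gives $P_{0,s}P_{0,b}=0$, so the constraint $P_{0,b}P_{0,s}=0$ holds; and $\Tr P_{0,s}=N_e^0-N_b=N_e-N_b=N_s$. Conversely, for \emph{any} PET-admissible $P_s$ the matrix $P_s+P_{0,b}$ is an orthogonal projector of rank $N_s+N_b=N_e$ — the cross terms vanish because $P_sP_{0,b}=P_{0,b}P_s=0$ — hence it is admissible in the unconstrained problem~\eqref{eq:energy_linear}.

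Second, I would chain the three energies. Evaluating the PET objective at the trial state $P_{0,s}$ gives $E^{\text{PET}}\le \Tr[H_0(P_{0,s}+P_{0,b})]=\Tr[H_0 P_0]=E_0$, and since $H=H_0$ together with $N_e=N_e^0$ forces $E_0=E$, we obtain $E^{\text{PET}}\le E$. On the other hand, by the previous paragraph the PET feasible set embeds (via $P_s\mapsto P_s+P_{0,b}$) into the feasible set of~\eqref{eq:energy_linear} with the same value of the objective, so $E^{\text{PET}}\ge E$. Therefore $E^{\text{PET}}=E$. A PET minimizer $P_s$ exists — the feasible set is nonempty (it contains $P_{0,s}$) and compact, and the objective is continuous — and for it $P_s+P_{0,b}$ attains the global minimum $E$ of~\eqref{eq:energy_linear}.

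Third, I would conclude by uniqueness. Proposition~\ref{prop:linearks}, applied to $H$ with its gap at $N_e$ (which is part of the standing assumptions, being exactly what makes $P_0$ well defined), says that~\eqref{eq:energy_linear} has a \emph{unique} minimizer, namely $P_0=P_{0,b}+P_{0,s}$. Hence $P_s+P_{0,b}=P_{0,b}+P_{0,s}$, and cancelling $P_{0,b}$ yields $P_s=P_{0,s}$.

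There is no deep obstacle here; the only point that warrants care is the inequality $E^{\text{PET}}\ge E$, i.e.\ checking that a PET-admissible $P_s$ really lifts to a rank-$N_e$ idempotent so that the variational comparison with~\eqref{eq:energy_linear} is legitimate. An alternative, more hands-on route argues straight from Proposition~\ref{prop:linearpet}: from $P_{0,b}P_{0,s}=0$, $P_{0,b}(I-P_0)=0$ (which follows from $P_{0,b}P_0=P_{0,b}$), and the $H_0$-invariance of $\Ran P_0$ and $\Ran(I-P_0)$, one checks that $\mathcal{B}_0^\perp=\Ran P_{0,s}\oplus\Ran(I-P_0)$ is an invariant orthogonal decomposition for $H_0|_{\mathcal{B}_0^{\perp}}$, the first block carrying eigenvalues $\le\lambda_{N_e^0}$ (by the Rayleigh-quotient bound on $\Ran P_0$) and the second carrying $\lambda_{N_e^0+1},\dots,\lambda_N$; the lowest $N_s$ eigenvectors of $H_0|_{\mathcal{B}_0^{\perp}}$ then span exactly $\Ran P_{0,s}$, giving $P_s=P_{0,s}$ and, as a bonus, verifying the gap hypothesis of Proposition~\ref{prop:linearpet}. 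The subtle step on this second route is the eigenvalue-ordering argument placing the ``occupied'' block strictly below the ``virtual'' block.
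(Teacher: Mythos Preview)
Your proof is correct and follows the same overall strategy as the paper: show $P_{0,s}$ is PET-feasible, compare the two variational problems, and conclude by uniqueness. The difference lies in \emph{which} uniqueness statement is invoked. The paper first spends half its proof verifying, via the Courant--Fischer min--max theorem, that $H_0\vert_{\mathcal{B}_0^\perp}$ has a positive gap at level $N_s$; only then can it appeal to the uniqueness clause of Proposition~\ref{prop:linearpet}. You instead lift the PET minimizer back to the full problem and use the uniqueness clause of Proposition~\ref{prop:linearks}, which needs only the standing gap assumption on $H$ itself. This is a cleaner route and avoids the min--max argument entirely.

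What the paper's detour buys is precisely the gap statement $\hat{\lambda}_{N_s+1}-\hat{\lambda}_{N_s}>0$ for the restricted operator, which is used again later (Remark~\ref{rem:dm_energy_pet}) and is of independent interest. You correctly flag this and recover it in your alternative route via the invariant decomposition $\mathcal{B}_0^\perp=\Ran P_{0,s}\oplus\Ran(I-P_0)$; that argument is more structural than the paper's min--max inequalities and gives the same conclusion.
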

\begin{proof}
  By the Courant-Fischer \REV{min-max} theorem,
  \[
  \hat{\lambda}_{N_{s}+1} =
  \max_{\stackrel{\text{dim}(S)=N-N_{e}}{S\subset
  \mathcal{B}_{0}^{\perp}}} \min_{u\in
  S\backslash\{0\}} \frac{u^{*}
  H\vert_{\mathcal{B}_{0}^{\perp}}u}{u^{*}u} \le
  \max_{\text{dim}(S)=N-N_{e}} \min_{u\in S\backslash\{0\}} \frac{u^{*}
  Hu}{u^{*}u} = \lambda_{N_{e}+1}.
  \]
  \REV{Furthermore}, when $H=H_{0}$, all eigenvectors of $H$ corresponding to eigenvalues
  $\lambda_{N_{e}+1}$ and above are in the subspace
  $\mathcal{B}_{0}^{\perp}$, and hence $\hat{\lambda}_{N_{s}+1}\ge
  \lambda_{N_{e}+1}$. Therefore
  \[
  \hat{\lambda}_{N_{s}+1}=\lambda_{N_{e}+1}.
  \]
  Again using the Courant-Fischer \REV{min-max} theorem we have
  \begin{equation}
    \hat{\lambda}_{N_{s}}\le \lambda_{N_{e}}.
  \end{equation}
  Hence the gap condition of $H$, \ie $\lambda_{N_{e}+1}-\lambda_{N_{e}}>0$, implies
  that the gap condition for Proposition~\ref{prop:linearpet} holds, \ie
  \[
  \hat{\lambda}_{N_{s}+1}-\hat{\lambda}_{N_{s}}>0.
  \]

  Since the minimizer of PET is obtained from a constrained domain of
  the density matrix, we have
  \begin{equation}
    E = \underset{\scriptsize\begin{array}{c} P^2 = P, P^* = P \\ \Tr P = N_e \end{array} }{\inf} \Tr [H_0 P] \leq \underset{\scriptsize\begin{array}{c} P_s^2 = P_s, P_s^* = P_s \\ P_{0.b}P_s^* = 0, \Tr P_s = N_s \end{array} }{\inf} \Tr [ H_0(P_s + P_{0,b})].
    \end{equation}
  where $P_s = P_{0,s}$ already achieves the minimum. By the uniqueness
  of the minimizer in Proposition~\ref{prop:linearpet}, we have $P_s = P_{0,s}$.
\end{proof}

\begin{rem}
  The proof of Proposition~\ref{prop:consistentpet} is not entirely
  straightforward. This is mainly due to the fact that $P_{0,b}$ is
  obtained through some linear combination of eigenvectors of
  $H_{0}$ corresponding to the lowest $N_{e}$ eigenvalues. Hence
  $H$ and $P_{0,b}$ generally do not commute even when $H=H_{0}$.
  Nonetheless, the consistency of PET implies that PET is an in
  principle exact theory, given the proper choice of the reference
  projector $P_{0,b}$.
\end{rem}

\REV{
\begin{rem}\label{rem:dm_energy_pet}
  Assume $N_e = N_e^0$, then we have that
\begin{align*}
  \norm{P-P^{\text{PET}}} & \leq \norm{ P - P_0 + P_0 - P^{\text{PET}}_0 + P_0^{\text{PET}} - P^{\text{PET}} },\\
                          & \leq \norm{ P - P_0 } + \norm{P_0 - P^{\text{PET}}_0} + \norm{ P_0^{\text{PET}}- P^{\text{PET}} },\\
                          & \leq \norm{ P - P_0 }  + \norm{ P_0^{\text{PET}}- P^{\text{PET}} }.
\end{align*}
Here we used $P_0 = P^{\text{PET}}_0$ by Proposition ~\ref{prop:consistentpet} and $\norm{\cdot}$ means the operator norm.
In addition, given that $H_0$ has a positive energy gap, and $\norm{\delta H}$ is sufficiently small we have, by continuity, that
\[
\norm{ P - P_0 } \sim \mathcal{O}(\norm{\delta H}).
\]
By the proof of Proposition ~\ref{prop:consistentpet}, $H\vert_{\mathcal{B}_{0}^{\perp}}$ also has a positive gap.
Using \eqref{eqn:P_PET} we have that
\[
\norm{ P_0^{\text{PET}}- P^{\text{PET}} }  = \norm{ P_s - P_{0,s }} \sim \mathcal{O}(\norm{\delta H}),
\]
thus resulting in
\begin{equation}
  \norm{P-P^{\text{PET}}}  \sim \mathcal{O}(\norm{\delta H}).
\end{equation}
  In addition, Proposition~\ref{prop:consistentpet} states that when $\delta H=0$,
  there is zero-th order consistency for the energy $E=E^{\text{PET}}$. The we can use a standard perturbative
  argument coupled with the computation above, and the fact that $P^{PET}$ lies in the
  feasible set to obtain
  \[
  \abs{E-E^{\text{PET}}} \sim \Or(\norm{\delta H}^2).
  \]
\end{rem}
}

\subsection{Perturbation}
\REV{In the following discussion, we derive a perturbative correction
to the density matrix when $H\approx H_{0}$.
Unfortunately, standard perturbation analysis for eigenvalue problems
do not apply directly given that PET depends on the solution of two separate eigenvalue
problems: one from $H_{0}$, to determine the projector
$P_{0,b}$; and other from
$H\vert_{\mathcal{B}_{0}^{\perp}}$, to compute $P_s$. }


In order to bypass this difficulty, \REV{though a proper choice of the basis set,
the two eigenvalue problems can be formally combined into one. In this rotated basis,
we use standard perturbative analysis to compute a perturbative correction. The result
are then rotated back to the original basis. }

Let us split the set of vectors $\hat{\Psi}$ from the
eigen-decomposition~\eqref{eqn:Hperpdecomp} as
\[
\hat{\Psi} = [\Psi_{s},\Psi_{u}],
\]
where $\Psi_{s}$ corresponds the projector $P_{s}$ according to
Proposition~\ref{prop:linearpet}, \REV{and $\Psi_{u}$ denotes the rest of
the vectors. Here the subscript $u$ stands for {\it unoccupied orbitals}
following the terminology of KSDFT}. Correspondingly the diagonal matrix
$\hat{\Lambda}$ is split into the block diagonal form as
\[
\hat{\Lambda} = \begin{bmatrix}
  \Lambda_{s} & 0\\
  0 & \Lambda_{u}
  \end{bmatrix}.
\]
\REV{We combine $\hat{\Psi}$ and $\Psi_{0,b}$ from \eqref{eqn:Psi0b}}, to form a unitary $N\times N$ matrix
\begin{equation}
  W := [ \Psi_{0,b}, \Psi_{s},\Psi_{u}],
  \label{eqn:Wbasis}
\end{equation}
and the matrix representation of $H$ with respect to the basis
$W$, denoted by $H_{W}$, can be written as
\begin{align*}
  H_{W} = W^{*}H W =&
  \begin{bmatrix}
    \Psi_{0,b}^{*} H\Psi_{0,b} & \Psi_{0,b}^{*} H\Psi_{s} & \Psi_{0,b}^{*}H\Psi_{u} \\
    \Psi_{s}^{*} H\Psi_{0,b}   & \Psi_{s}^{*}   H\Psi_{s} & \Psi_{s}^{*}  H\Psi_{u}\\
    \Psi_{u}^{*} H\Psi_{0,b}   & \Psi_{u}^{*}   H\Psi_{s} & \Psi_{u}^{*}  H\Psi_{u}\\
  \end{bmatrix}, \\
  =& \begin{bmatrix}
    \Psi_{0,b}^{*} H\Psi_{0,b} &  \Psi_{0,b}^{*} H\Psi_{s} &   \Psi_{0,b}^{*}H\Psi_{u}\\
    \Psi_{s}^{*} H\Psi_{0,b}   &  \Lambda_{s}              &  0 \\
    \Psi_{u}^{*}H\Psi_{0,b}    &  0                        &  \Lambda_{u}
  \end{bmatrix}.
\end{align*}
In the second equality, we have used the fact that all columns of
$\Psi_{s},\Psi_{u}$ belong to $\mathcal{B}_{0}^{\perp}$, and consist of
eigenvectors of $H\vert_{\mathcal{B}_{0}^{\perp}}$ associated with
different sets of eigenvalues. Hence the
inner product $\Psi_{s}^{*}H\Psi_{u}$ vanishes.
Again we note that not all off-diagonal matrix blocks vanish even when $H=H_{0}$.

From this perspective, we find that PET makes two approximations: first, it discards the off-diagonal matrix blocks, so that
\begin{align*}
  H_{W} \approx \begin{bmatrix}
    \Psi_{0,b}^{*} H\Psi_{0,b} &  0           &   0\\
    0                          &  \Lambda_{s} &  0 \\
    0                          &  0           &  \Lambda_{u}
  \end{bmatrix}, \\
\end{align*}
and second, it replaces the block corresponding to the interactions within 
the bath by \REV{$\Psi_{0,b}^{*} H_0 \Psi_{0,b}$ which is equal to $\Lambda_{0,b}$ 
following \eqref{eqn:Psi0b}}. The resulting matrix,
\[
H_{W}^{\text{PET}} =
\begin{bmatrix}
  \Lambda_{0,b} & 0 & 0\\
  0 & \Lambda_{s} & 0\\
  0 & 0 & \Lambda_{u}
\end{bmatrix},
\]
is already diagonalized in the $W$-basis.  Assume that the first
$N_{e}$ eigenvalues of $H_{W}^{\text{PET}}$ include all the diagonal
entries of $\Lambda_{0,b}$ (according to
Proposition~\ref{prop:consistentpet}, this is at least valid when
$H=H_{0}$ and $N_{e}=N_{e}^{0}$), we find that the density matrix in the
$W$-basis takes the block diagonal form
\[
P^{\text{PET}}_{W} =
\begin{bmatrix}
  I_{N_{b}} & 0 & 0\\
  0 & I_{N_{s}} & 0 \\
  0 & 0 & 0\\
\end{bmatrix}.
\]
When rotated back to the standard basis, the density matrix becomes
\[
P^{\text{PET}} = W P^{\text{PET}}_{W} W^{*} = \Psi_{0,b}\Psi_{0,b}^{*}+
\Psi_{s}\Psi_{s}^{*} = P_{s} + P_{0,b},
\]
which is the PET solution.

One advantage of the representation in the $W$-basis is that the density
matrix $P^{\text{PET}}_{W}$ can be concisely written using the Cauchy
contour integral formula as
\begin{equation}
 P^{\text{PET}}_{W} = \frac{1}{2 \pi \I}\oint_{\mathcal{C}} (z I -
 H_{W}^{\text{PET}} )^{-1} \ud z.
  \label{}
\end{equation}
Here $\mathcal{C}$ is a contour in the complex plane surrounding only the
lowest $N_{e}$ eigenvalues of $H_{W}^{\text{PET}}$.

The first order perturbative correction to PET is then given by the neglected
off-diagonal matrix blocks $\Psi_{s}^{*} H\Psi_{0,b}$ and
$\Psi_{u}^{*}H\Psi_{0,b}$, and the diagonal term involving $\Psi_{0,b}^{*} (H-H_0)\Psi_{0,b}$.  The formula for the first order
perturbation is given in
Proposition~\ref{prop:perturblinearpet}.

\begin{prop}[First order perturbation]\label{prop:perturblinearpet}
  The first order perturbation to the density matrix from PET is
  given by
  \begin{equation}
    \delta P = \delta \Psi_{0,b} \Psi_{0,b}^{*} + \text{h.c.}
    \label{eqn:Pcorrection}
  \end{equation}
  where $\delta \Psi_{0,b}\in\CC^{N\times N_{b}}$ satisfies the equation
  \begin{equation}
   Q\left(\lambda_{i;0,b}I-H\right) Q\delta \psi_{i;0,b} =  Q (H
   \psi_{i;0,b}),
   \quad Q\delta \psi_{i;0,b} = \delta\psi_{i;0,b}.
    \label{eqn:firstorder}
  \end{equation}
  Here the projector $Q=I-(P_{s}+P_{0,b})=\Psi_{u}\Psi_{u}^{*}$.
  $\lambda_{i;0,b}$ is the $i$-th diagonal element of
  $\Lambda_{0,b}$, and $\psi_{i;0,b},\delta\psi_{i;0,b}$ are the
  $i$-th column of $\Psi_{0,b},\delta\Psi_{0,b}$, respectively.
  $\text{h.c.}$ stands for the \REV{Hermitian} conjugate of the first term.
\end{prop}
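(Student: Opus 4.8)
The plan is to recast PET as an honest Hermitian eigenvalue problem in the $W$-basis of \eqref{eqn:Wbasis}, for which $P^{\text{PET}}$ is the \emph{exact} spectral projector, and then apply textbook first order perturbation theory. Writing $H_{W}=W^{*}HW$, the discussion preceding the proposition already identifies the splitting $H_{W}=H_{W}^{\text{PET}}+V$, where $H_{W}^{\text{PET}}=\mathrm{diag}(\Lambda_{0,b},\Lambda_{s},\Lambda_{u})$ and $V$ is supported exactly on the three neglected pieces: the off-diagonal blocks $\Psi_{0,b}^{*}H\Psi_{s}$ and $\Psi_{0,b}^{*}H\Psi_{u}$ (and their conjugates) and the top-left correction $\Psi_{0,b}^{*}(H-H_{0})\Psi_{0,b}=\Psi_{0,b}^{*}\,\delta H\,\Psi_{0,b}$; the $\Psi_{s}$--$\Psi_{u}$ block is already zero since those columns are eigenvectors of $H|_{\mathcal{B}_{0}^{\perp}}$ with disjoint eigenvalues. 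Under the standing assumption that the lowest $N_{e}$ eigenvalues of $H_{W}^{\text{PET}}$ are exactly the diagonal entries of $\Lambda_{0,b}$ together with those of $\Lambda_{s}$, we may fix a contour $\mathcal{C}$ enclosing precisely those eigenvalues and write $P_{W}^{\text{PET}}=\frac{1}{2\pi\I}\oint_{\mathcal{C}}(zI-H_{W}^{\text{PET}})^{-1}\,\ud z$.

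First I would expand the perturbed projector to first order, $\delta P_{W}=\frac{1}{2\pi\I}\oint_{\mathcal{C}}(zI-H_{W}^{\text{PET}})^{-1}\,V\,(zI-H_{W}^{\text{PET}})^{-1}\,\ud z$, and evaluate the residues. Since $H_{W}^{\text{PET}}$ is diagonal (its eigenvectors are the columns of $W$), only contributions from a pair consisting of one eigenvalue $\mu$ of $H_{W}^{\text{PET}}$ inside $\mathcal{C}$ and one eigenvalue $\nu$ outside survive, each with weight $1/(\mu-\nu)$; all occupied--occupied and unoccupied--unoccupied residues vanish. Among the surviving occupied--unoccupied couplings, those between $\Psi_{s}$ and $\Psi_{u}$ drop out because that block of $V$ is zero, leaving only the $\Psi_{0,b}$--$\Psi_{u}$ couplings. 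Rotating back through $W$, this collapses to $\delta P=\delta\Psi_{0,b}\Psi_{0,b}^{*}+\text{h.c.}$ with $\delta\psi_{i;0,b}=\sum_{a}\frac{\psi_{a}^{*}H\psi_{i;0,b}}{\lambda_{i;0,b}-\lambda_{a}}\,\psi_{a}$, the sum running over the columns $\psi_{a}$ of $\Psi_{u}$ with eigenvalues $\lambda_{a}$ of $H|_{\mathcal{B}_{0}^{\perp}}$. Equivalently, $\delta\psi_{i;0,b}$ is the virtual-orbital part of the Rayleigh--Schr\"odinger first order correction to the eigenvector $\psi_{i;0,b}$ of $H_{W}^{\text{PET}}$; the occupied-orbital part of that correction (driven by $\Psi_{0,b}^{*}H\Psi_{s}$ and by $\Psi_{0,b}^{*}\delta H\Psi_{0,b}$) cancels against its Hermitian conjugate and therefore does not move the projector at this order---this is precisely why the correction lives entirely in the bath.

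It then remains to check that the $\delta\psi_{i;0,b}$ just obtained is the unique solution of \eqref{eqn:firstorder}. Since it lies in $\Ran(\Psi_{u})=\Ran(Q)$, the constraint $Q\delta\psi_{i;0,b}=\delta\psi_{i;0,b}$ is automatic. For the linear equation I would use, for each column $\psi_{a}$ of $\Psi_{u}$, the identities $QP_{0,b}=0$ (from $Q=I-P_{s}-P_{0,b}$ together with $P_{s}P_{0,b}=(P_{0,b}P_{s})^{*}=0$), $Q\psi_{a}=\psi_{a}$, and $(I-P_{0,b})H\psi_{a}=\lambda_{a}\psi_{a}$; these give $QH\psi_{a}=\lambda_{a}\psi_{a}$ and hence $Q(\lambda_{i;0,b}I-H)Q\psi_{a}=(\lambda_{i;0,b}-\lambda_{a})\psi_{a}$. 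Applying this termwise and matching with $Q(H\psi_{i;0,b})=\Psi_{u}\Psi_{u}^{*}H\psi_{i;0,b}=\sum_{a}(\psi_{a}^{*}H\psi_{i;0,b})\,\psi_{a}$ recovers \eqref{eqn:firstorder}; uniqueness in $\Ran(Q)$ follows because the gap assumption on $H|_{\mathcal{B}_{0}^{\perp}}$ (together with the assumption placing the $\Lambda_{0,b}$ entries among the lowest $N_{e}$ eigenvalues of $H_{W}^{\text{PET}}$) forces $\lambda_{i;0,b}\neq\lambda_{a}$ for all $a$. The step I expect to be the real obstacle is not any single computation but the bookkeeping of the second paragraph: pinning down exactly which blocks of $V$ feed into $\delta P$ at first order, and in particular showing that the occupied-sector pieces of the perturbation (the $\Psi_{0,b}$--$\Psi_{s}$ coupling and the $\Psi_{0,b}^{*}\delta H\Psi_{0,b}$ shift) leave the projector unchanged.
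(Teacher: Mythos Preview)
Your proposal is correct and follows essentially the same route as the paper: pass to the $W$-basis, write $P_{W}^{\text{PET}}$ as a contour integral of the diagonal resolvent $(zI-H_{W}^{\text{PET}})^{-1}$, expand to first order in the neglected blocks, and use residue cancellation to kill the $(b,b)$ and $(b,s)$ contributions so that only the $(b,u)$ block survives. Your final paragraph verifying \eqref{eqn:firstorder} via $QH\psi_{a}=\lambda_{a}\psi_{a}$ is in fact more explicit than the paper, which simply asserts the equation is satisfied after writing $\delta\psi_{i;0,b}=\Psi_{u}(\lambda_{i;0,b}-\Lambda_{u})^{-1}\Psi_{u}^{*}H\psi_{i;0,b}$.
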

\begin{rem}
  Following the previous notation we may define the subspace
  $\mathcal{B}:=\text{span}\{\Psi_{s},\Psi_{0,b}\}$, and $Q$ is the
  projector on the orthogonal complement subspace
  $\mathcal{B}^{\perp}$.  Since $\lambda_{0,b}$ is separated from the
  spectrum of $H\vert_{\mathcal{B}_{0}^{\perp}}$,
  Eq.~\eqref{eqn:firstorder} has a unique solution in
  $\mathcal{B}^{\perp}$. Eq.~\eqref{eqn:Pcorrection} suggests that the
  first \REV{order} correction to the system part $P_{s}$ vanishes, and the
  correction only comes from the bath part $P_{0,b}$. Furthermore, the
  correction is traceless due to the condition $\Psi_{0,b}^{*}\delta
  \Psi_{0,b} = 0$.  This means that the density matrix after the first
  order correction preserves the trace
  of the projector, which is $N_{e}$. In the
  context of KSDFT, this means that the first order correction preserves
  the number of electrons in the system.
\end{rem}

\begin{rem}
  From Eq.~\eqref{eqn:firstorder} it may appear that the correction does
  not vanish even when $H=H_{0}$. However, note that
  $H_{0}\psi_{i;0,b}\in
  \mathcal{B}:=\text{span}\{\Psi_{s},\Psi_{0,b}\}$, we have
  $QH_{0}\psi_{i;0,b} = 0$, and hence the first order correction \REV{indeed}
  vanishes. This is consistent with
  Proposition~\ref{prop:consistentpet}.
\end{rem}

\begin{rem}\label{rem:effectperturb}
  The perturbative correction requires the solution of $N_{b}$ linear
  equations to correct the projector from the bath.  It seems that this
  diminishes the purpose of PET which reduces the number of eigenpairs
  to be computed from $N_{e}$ to $N_{s}$ from a practical perspective.
  Hence the advantage of the perturbative correction becomes more
  apparent in the nonlinear setup in \REV{Section}~\ref{sec:nonlinear}, where
  the perturbation only needs to be applied once after the
  self-consistency is achieved.
\end{rem}

\begin{rem}\label{rem:SternheimerEq}
\REV{
  We point out that Eq.~\eqref{eqn:firstorder} shares some similarities
  to the Sternheimer equation used in density functional perturbation
  theory~\cite{BaroniGironcoliDalEtAl2001}. The perturbative correction
  lies in the  subspace orthogonal to the range of $P_s+P_{0,b}$.
  However, unlike the Sternheimer equation, in our case
  $\lambda_{i,0,b}$ is not necessarily an eigenvalue of $H$.
}
\end{rem}

\begin{proof}
  Our strategy is to derive the first order perturbation in the
  $W$-basis, denoted by $\delta P_{W}$, and then obtain $\delta P$
  according to $\delta P=W\delta P_{W} W^{*}$.

  Let us first denote by
  \[
  \delta H^{\text{PET}}_{W}
  = \begin{bmatrix}
    \Psi_{0,b}^{*} \delta H \Psi_{0,b} & \Psi_{0,b}^{*} H\Psi_{s} &  \Psi_{0,b}^{*}H\Psi_{u}\\
    \Psi_{s}^{*} H\Psi_{0,b} & 0                        &  0\\
    \Psi_{u}^{*}H\Psi_{0,b}  & 0                        &  0
  \end{bmatrix}
  \]
  the neglected off-diagonal matrix blocks in PET. $\delta
  H^{\text{PET}}_{W}$ may not be small even when $H=H_{0}$, but its contribution to
  the density matrix must vanish according to
  Proposition~\ref{prop:consistentpet}, and hence can be \REV{formally} treated
  perturbatively.

  Let $P_{W} = P^{\text{PET}}_W + \delta P_{W}$, setting $G_{W}(z) = (z
  - H_W)^{-1}$ and $G(z)^{\text{PET}}_W = (z - H_W^{\text{PET}})^{-1}$,
  we have the Dyson equation
  \[ G_{W}(z) = G^{\text{PET}}_{W}(z) + G^{\text{PET}}_{W}(z)  \delta H^{\text{PET}}_{W} G_{W}(z). \]
  Thus using the Cauchy integral formulation we have that
\begin{align*}
  \delta P_{W} = & P_W - P^{\text{PET}}_W, \\
               = & \frac{1}{2 \pi \I} \oint_{\mathcal{C}} G_W(z) - G_W^{\text{PET}}(z) \ud z,\\
               = & \frac{1}{2 \pi \I} \oint_{\mathcal{C}} G_W^{\text{PET}}(z)  \delta H^{\text{PET}}_{W} G_W(z) \ud z,\\
               = & \frac{1}{2 \pi \I} \oint_{\mathcal{C}} (z I - H_{W}^{\text{PET}})^{-1} \delta H^{\text{PET}}_{W} (z I - H_{W})^{-1} \ud z.
\end{align*}
  By setting $H_{W} \approx H_{W}^{\text{PET}}$, the first order correction is
  \[
  \delta P_{W} = \frac{1}{2 \pi \I}\oint_{\mathcal{C}}
  (z I - H_{W}^{\text{PET}})^{-1} \delta H^{\text{PET}}_{W} (z I -
  H_{W}^{\text{PET}})^{-1} \ud z.
  \]
  Since $H_{W}^{\text{PET}}$ is a diagonal matrix, $\delta P_{W}$ should
  have the same matrix sparsity pattern as $\delta H^{\text{PET}}_{W}$, \ie
  \[
  \delta P_{W} = \begin{bmatrix}
    (\delta P_{W})_{b,b}  & (\delta P_{W})_{b,s}  &   (\delta P_{W})_{b,u}\\
    (\delta P_{W})_{s,b}  & 0                     &  0\\
    (\delta P_{W})_{u,b}  & 0                     &  0
  \end{bmatrix}.
  \]
  First we compute
  \[
  (\delta P_{W})_{b,s} = \frac{1}{2 \pi \I}\oint_{\mathcal{C}}
    (z I - \Lambda_{0,b})^{-1} \Psi_{0,b}^{*}H\Psi_{s}(z I -
    \Lambda_{s})^{-1} \ud z.
  \]
  Note that for any diagonal elements $\lambda_{i;0,b},\lambda_{j;s}$
  from $\Lambda_{0,b},\Lambda_{s}$, respectively, they are both enclosed
  in the contour $\mathcal{C}$.

  \REV{On the one hand}, if $\lambda_{i;0,b}\neq \lambda_{j;s}$,
  then
  \begin{align*}
  \frac{1}{2 \pi \I}\oint_{\mathcal{C}}
  (z  - \lambda_{i;0,b})^{-1} (z  - \lambda_{j;s})^{-1} \ud z
  & =   \frac{1}{2 \pi \I}\oint_{\mathcal{C}}
  \frac{ (z  -
  \lambda_{i;0,b})^{-1}-(z  - \lambda_{j;s})^{-1} }{\lambda_{i;0,b}-\lambda_{j;s}} \ud z \\
  & =
  \frac{1-1}{\lambda_{i;0,b}-\lambda_{j;s}} =  0.
  \end{align*}
  On the other hand, if $\lambda_{i;0,b}=\lambda_{j;s}$, \REV{then} 
  we would obtain an integral of the form
 \begin{equation*}
  \frac{1}{2 \pi \I}\oint_{\mathcal{C}}
  (z  - \lambda_{i;0,b})^{-2} \ud z,
 \end{equation*}
  which vanishes since the residue for the integrand is zero.

  For the term
  \[
  (\delta P_{W})_{b,b} = \frac{1}{2 \pi \I}\oint_{\mathcal{C}}
    (z I - \Lambda_{0,b})^{-1} \Psi_{0,b}^{*} \delta H \Psi_{0,b}(z I -
    \Lambda_{0,b})^{-1} \ud z,
  \]
  an analogous argument can be used to show that it vanishes.

  This means that the matrix blocks $(\delta P_{W})_{b,s}$, 
  $(\delta P_{W})_{s,b}$, and $(\delta P_{W})_{b,b}$ vanish, and
  the only nonzero matrix blocks are $(\delta P_{W})_{u,b}$ and its
  conjugate. Moreover,
  \begin{align*}
    (\delta P_{W})_{u,b} =& \frac{1}{2 \pi \I}\oint_{\mathcal{C}}
    (z I - \Lambda_{u})^{-1} \Psi_{u}^{*}H\Psi_{0,b}(z I -
    \Lambda_{0,b})^{-1} \ud z, \\
    = & \sum_{i} (\lambda_{i;0,b}-\Lambda_{u})^{-1}
    \Psi_{u}^{*} H \psi_{i;0,b}.
  \end{align*}
  Back to the standard basis
  \begin{align*}
    \delta P = & \Psi_{u}\sum_{i} (\lambda_{i;0,b}-\Lambda_{u})^{-1}
    \Psi_{u}^{*} H \psi_{i;0,b} \psi_{i;0,b}^{*} + \text{h.c.} \\
    = & \left(\sum_{i} \delta \psi_{i;0,b} \psi_{i;0,b}^{*}\right) + \text{h.c.}
  \end{align*}
  Here
  \[
  \delta \psi_{i;0,b} = \Psi_{u}(\lambda_{i;0,b}-\Lambda_{u})^{-1}
  \Psi_{u}^{*} H \psi_{i;0,b}.
  \]
  Using the projector $Q=\Psi_{u}\Psi_{u}^{*}$,  we find that $\psi_{i;0,b}$
  satisfies \eqref{eqn:firstorder}, and we prove the
  proposition.
\end{proof}

We summarize the perturbatively corrected PET in Algorithm~\ref{alg:perturbation}.

\begin{algorithm}
  \caption{Perturbatively corrected projection based embedding theory
  for linear eigenvalue problems.}
  \label{alg:perturbation}
\begin{algorithmic}[1]
  \Statex Input: $H$, $H_0$, $\Psi_{0,b}$, $\Psi_s$.
  \Statex Output: $\delta \Psi_{0,b}, \delta P$.
  \State Compute diagonal matrix  $\Lambda_{0,b} = \Psi_{0,b}^* H_0 \Psi_{0,b}$.
  \State Obtain the PET density matrix  $P^{\text{PET}} =
  \Psi_{0,b}\Psi_{0,b}^* +\Psi_s \Psi_s^*$.
  \State Compute the right-hand side $R = (I -P^{\text{PET}}) H \Psi_{0,b}$.
  \State Compute $\delta \Psi_{0,b}$ by solving \eqref{eqn:firstorder}.
  \State Obtain the perturbation to the density matrix
  $\delta P = \delta \Psi_{0,b} \Psi_{0,b}^* + \Psi_{0,b} \delta
  \Psi_{0,b}^* $.
\end{algorithmic}
\end{algorithm}

\subsection{Comparison of PET with the Rayleigh-Schr\"odinger Perturbation Theory}

\REV{Let us now have a more detailed comparison between the results from
Proposition~\ref{prop:perturblinearpet} with those from
the standard Rayleigh-Schr\"odinger (RS) perturbation theory. For simplicity we
consider the computation of the lowest, non-degenerate eigenpair $(\lambda,\psi)$
corresponding to $H$. The perturbation is computed with respect to the
lowest, non-degenerate eigenpair
$(\lambda_{0},\psi_{0})$ corresponding to the reference matrix $H_{0}$. We have
\[
\lambda_{0} = \psi_{0}^{*} H_{0} \psi_{0} = \Tr[H_{0} P_{0}],
\]
where the reference density matrix is $P_0=\psi_{0}\psi_{0}^{*}$.
The first order correction to the
eigenvalue is
\[
\delta \lambda^{(1)} = \psi_{0}^{*} \delta H \psi_{0} :=
\Tr[\delta H P_0],
\]
and the first order correction to the lowest eigenfunction can be computed as
\begin{equation}
\label{eqn:deltaPsi_RS}
  \delta \psi^{(1)} = Q_{0} (\lambda_{0}-H_{0})^{-1}Q_{0} (\delta H
\psi_{0}),
\end{equation}
where $Q_{0}=I-P_0$ projects to the subspace orthogonal to the range of
$P_{0}$.
This also gives the first order correction of the density matrix as
\[
\delta P^{(1)} = \delta \psi^{(1)}\psi_{0}^{*} + \psi_{0}
(\delta \psi^{(1)})^{*}.
\]
From \eqref{eqn:deltaPsi_RS} we have that $\delta \psi^{(1)}$ is
orthogonal to $\psi_{0}$, thus we can write
\[
\delta P^{(1)} P_{0} = \delta \psi^{(1)} \psi_{0}^{*}.
\]
In addition, the first order correction of the eigenfunction allows
us to compute the second order correction
to the eigenvalue as
\[
    \delta \lambda^{(2)} =\psi_{0}^{*} \delta H \delta \psi^{(1)} =
    \Tr[P_{0} \delta H \delta P^{(1)}].
\]
Let us then define
\[
P^{(1)} := P_{0} + \delta P^{(1)}, \quad \lambda^{(1)} := \lambda_{0} + \delta \lambda^{(1)} = \Tr[H P_{0}],
\]
and
\begin{equation}
  \begin{split}
    \lambda^{(2)}:= & \lambda_{0} + \delta \lambda^{(1)} + \delta \lambda^{(2)},  \\
    = &  \Tr[P_{0} H] + \Tr[P_{0} \delta H \delta P^{(1)}], \\
    = & \Tr[P_{0} H P_{0}] + \Tr[P_{0} H \delta P^{(1)}], \\
    = & \Tr[P_{0} H P^{(1)}].
  \end{split}
  \label{eqn:RS_E_order2}
\end{equation}
Here we have used $\Tr[P_{0} H_0 \delta P^{(1)}]=0$.  }

\REV{To summarize, the RS perturbation theory states that:
\begin{equation}
  \abs{\lambda-\lambda^{(1)}} \sim \Or(\norm{\delta H}^2),
  \,\, \norm{P-P^{(1)}} \sim \Or(\norm{\delta H}^2),\text{ and} \,\,
  \abs{\lambda-\lambda^{(2)}} \sim \Or(\norm{\delta H}^3).
  \label{}
\end{equation}
It is worth remarking that $\norm{P-P^{(1)}} \sim \Or(\norm{\delta
H}^2)$ does not imply $\abs{\lambda-\lambda^{(2)}} \sim \Or(\norm{\delta
H}^4)$. This is because the perturbed density matrix
$P^{(1)}$ satisfies the symmetry and trace condition, but not the
idempotency condition as in the feasible set of
the optimization problem~\eqref{eq:energy_linear}. Therefore, the
standard squared relation between the error of the eigenvalue and
the error of the eigenfunction does not hold.
In fact, Eq.~\eqref{eqn:RS_E_order2} suggests that the eigenvalue
computed to the correct order is not equal to $\Tr[H P^{(1)}]$, but
$\Tr[P_{0} H P^{(1)}]$.}

\REV{Motivated from Eq.~\eqref{eqn:RS_E_order2}, we may define the perturbed
energy in the PET formulation as
\begin{equation}\label{eqn:energy_pert}
    E^{\text{pert}} := \Tr [P^{\text{PET}} H P^{\text{pert}}],
\end{equation}
where
\begin{equation}
  P^{\text{pert}} := P^{\text{PET}} + \delta P.
\end{equation}
However, the perturbation theory used in Proposition~\ref{prop:perturblinearpet}
differs form the RS perturbation theory, in the sense that the
perturbation is performed with respect to $\delta H_W^{\text{PET}} = H_W - H_W^{\text{PET}}$,
rather than $\delta H$. In particular, $\delta H_W^{\text{PET}}$ may
not vanish even when $H=H_{0}$,  unless
\begin{equation}
  \Psi_{0,b}^{*} H_{0}\Psi_{0,s} = 0.
  \label{eqn:special_case_condition}
\end{equation}
In particular, Eq.~\eqref{eqn:special_case_condition} will be satisfied if the columns of $\Psi_{0,b}$ are eigenvectors of
$H_{0}$.  In such a case, the results of the perturbation theory of PET agree with those from the RS perturbation theory:
\[
  \norm{P-P^{\text{pert}}} \sim \Or(\norm{\delta H}^2), \quad \abs{E-E^{\text{pert}}} \sim \Or(\norm{\delta H}^3).
\]
This will be confirmed by the numerical results.
}


\REV{However, when Eq.~\eqref{eqn:special_case_condition} is violated, $\norm{\delta H^{\text{PET}}_{W}}$ may not be small even when $\norm{\delta H}$ is small, and the perturbation theory
developed in Proposition~\ref{prop:perturblinearpet} holds only formally.
In such a case, the perturbation theory of PET does not improve the asymptotic convergence rate, and we have
\[
  \norm{P-P^{\text{pert}}} \sim \Or(\norm{\delta H}), \quad \abs{E-E^{\text{pert}}} \sim \Or(\norm{\delta H}^2).
\]
Interestingly, our numerical results indicate that even when the perturbative
correction is formal, the preconstant can be much reduced after the perturbation correction.}

\section{PET for nonlinear problems}\label{sec:nonlinear}

In this section we generalize PET and the perturbative expansion to
the nonlinear case as in KSDFT. First, define the energy functional
\begin{equation}
  \cE[P] = \Tr [ P H_{L} ] + E_{\Hxc}[P],
  \label{}
\end{equation}
where $H_{L}$ is the linear part of the Hamiltonian, and is a given
matrix derived from the discretized Laplacian operator and the
electron-nuclei interaction potential. $E_{\Hxc}[P]$ consists of the
Hartree, and exchange correlation energy, and is a
nonlinear functional of the density matrix $P$. Moreover, all the
information of the quantum system, including the atomic types and positions,
is given by the electron-nuclei interaction in $H_{L}$.
The ground state energy of KSDFT can be obtained from the following variational problem
\begin{equation}
  E = \underset{\scriptsize\begin{array}{c} P^2 = P, P^* = P \\ \Tr P =
    N_e \end{array} }{\inf} \cE[P], \label{eq:energy_non_linear}
\end{equation}

Analogous to Proposition~\ref{prop:linearks}, the corresponding
Euler-Lagrange equation is
\begin{equation}
  H[P]\Psi =(H_{L} + V_{\Hxc}[P]) \Psi = \Psi \Lambda, \quad P=\Psi\Psi^{*},
  \label{}
\end{equation}
where $(\Psi,\Lambda)$ are the lowest $N_{e}$ eigenpairs of the
nonlinear Hamiltonian $H[P]$, and the functional derivative $V_{\Hxc}[P]
= \frac{\delta E_{\Hxc}[P]}{\delta P}$ is called the
exchange-correlation potential. This is precisely~\eqref{eqn:KSDFT}.
However, we remark that the procedure of taking the lowest $N_{e}$
eigenpairs, which is called the \textit{aufbau} principle in electronic
structure theories, is not always valid.
The \textit{aufbau} principle has been found to be violated
for certain model energy functionals~\cite{LiuWenWangEtAl2015}, but numerical experience indicates
that it generally holds in the context of KSDFT calculations for real
materials.  In the discussion below, we always assume the counterpart to
Proposition~\ref{prop:linearks} holds for the nonlinear problems under
consideration.

According to the discussion in Section~\ref{sec:linear}, the key ansatz
of the PET is that for some reference system with a different linear
part of the Hamiltonian $H_{0,L}$, we have evaluated the density matrix
and computed the projector $P_{0,b}$. Then for the system of interest,
PET evaluates the modified variational problem by restricting the \REV{feasible
set} of the density matrix as
\begin{equation}
  E^{\text{PET}} = \underset{\scriptsize\begin{array}{c} P_s^2 = P_s,
    P_s^* = P_s \\ P_{0,b}P_s = 0,  \Tr P_s = N_s \end{array} }{\inf}
    \cE[P_{s}+P_{0,b}], \label{eq:energy_pet_non_linear}
\end{equation}
Analogous to Proposition~\ref{prop:linearpet}, by assuming the
corresponding \textit{aufbau} principle, PET can be solved by the
following nonlinear eigenvalue problem
\begin{equation}
  H[P]\vert_{\mathcal{B}_{0}^{\perp}}\Psi_{s} := \Psi_{s} \Lambda_{s}, \quad
  P_{s}=\Psi_{s}\Psi_{s}^{*}, \quad P=P_{s}+P_{0,b}.
  \label{eqn:nonlinear_pet}
\end{equation}
Here $(\Psi_{s},\Lambda_{s})$ are the lowest $N_{s}$ eigenpairs of the
self-consistent Hamiltonian $H[P]\vert_{\mathcal{B}_{0}^{\perp}}$.

The first order perturbative correction to PET is entirely analogous to
Proposition~\ref{prop:perturblinearpet}.  According to
Remark~\ref{rem:effectperturb}, the effectiveness of the perturbative
approach mainly lies in the fact that it only needs to be applied once
after ~\eqref{eqn:nonlinear_pet} reaches self-consistency.

\REV{Once $P^{\text{pert}}$ is obtained we define the energy as
\begin{equation}
\label{eqn:E_pert}
  E^{\text{pert}} := E^{\text{PET}} + \Tr[P^{\text{PET}} H[P^{\text{PET}}] P^{\text{pert}}],
\end{equation}
\ie our correction of the energy is only at the linear level.
We point out that \eqref{eqn:E_pert} is only correct in the spinless or
spin unrestricted case. For spin restricted calculations a factor $1/2$
needs to the included in the correction.}

\REV{In addition, we note that we can compute the atomic forces for the PET solution
using the Hellmann-Feynman formula, which is due to the fact that the
solution satisfies a variational principle. However, for the perturbation,
the resulting approximation does not satisfy any variational principle,
thus we use an expensive finite difference approach to compute the forces.
For the sake of consistency we use an standard second order finite
difference scheme to approximate the force for both PET and the corrected
approximation. }

\begin{rem}
  In~\cite{ManbyStellaGoodpasterEtAl2012} the Euler-Lagrange equation
  takes a slightly different form from~\eqref{eqn:nonlinear_pet}. The
  connection with the present formulation can be established
  by noting that the energy functional satisfies the identity
  \[
  \cE[P_{s}+P_{0,b}] = \cE[P_{s}] + \left(\cE[P_{s}+P_{0,b}] -
  \cE[P_{s}] \right) .
  \]
  Then the Euler-Lagrange equation gives the Hamiltonian
  \[
  H_{L}+V_{\Hxc}[P_{s}] + (V_{\Hxc}[P_{s}+P_{0,b}]-V_{\Hxc}[P_{s}])
  \]
  restricted to the subspace $\mathcal{B}_{0}^{\perp}$. The term in the
  parenthesis,
  $V_{\text{emb}}(P_{s}):=(V_{\Hxc}[P_{s}+P_{0,b}]-V_{\Hxc}[P_{s}])$, is
  called the ``embedding potential'', which can be interpreted as an
  external potential imposed onto the system part from the bath. For
  instance, in the absence of the exchange-correlation, $V_{\Hxc}\equiv
  V_{\text{H}}$ is a linear mapping. Then $V_{\text{emb}} =
  V_{\text{H}}[P_{0,b}]$ is the Coulomb interaction solely due to the projector
  from the bath.
\end{rem}

\section{Evaluation of the \REV{bath} projector}\label{sec:implementation}

%
The success of PET relies on a proper choice of the reference
projector $P_{0,b}$. The suggestion
from~\cite{ManbyStellaGoodpasterEtAl2012} is to compute a set of
localized functions within the subspace $\text{span}\{\Psi_{0}\}$ to
evaluate $P_{0,b}$. For simplicity, we use the
notation from the linear problem, but the procedure can be directly
generalized to the nonlinear setup as well.

Simply speaking, for a class of matrices $H$ satisfying the gap condition,
we may expect that the matrix elements of the density matrix $P$ decays
rapidly along the off-diagonal direction.  In the physics literature
this is referred to as the ``nearsightedness''
principle~\cite{Kohn1996,ProdanKohn2005}, and there is a rich literature
studying the validity of such decay
property (see \eg~\cite{BrouderPanatiCalandraEtAl2007,BenziBoitoRazouk2013}).
We further expect that there exists a unitary matrix $U\in
\CC^{N_{e}\times N_{e}}$, called a gauge matrix, so that each
column of the rotated matrix $\Phi = \Psi U$
is localized, \ie it concentrates on a small number of
elements compared to the size of the vector $N$.
We point out that efficient numerical algorithms have been developed to compute such
gauge and the corresponding localized
functions (see
\eg~\cite{FosterBoys1960,MarzariVanderbilt1997,DamleLinYing2015}).
Once the localized
functions are obtained, we may find localized functions associated with
the index set for the bath $\mathcal{I}_{b}$ denoted by $\Psi_{0,b}$.
To make the discussion self-contained, we briefly introduce the recently
developed selected
columns of the density matrix (SCDM) method~\cite{DamleLinYing2015}
below as a simple and robust localization method to generate
$P_{0,b}$. Other localization techniques can certainly be used as well.

The main idea of the SCDM procedure is that the localized function $\Phi$ are
obtained directly from columns of the density matrix $P = \Psi\Psi^*$.
However, picking $N_e$ random columns of $P$ may result in a poorly
conditioned basis.  In order to choose a well
conditioned set of columns, denoted $\CS = \left\{c_1,c_2,\ldots,c_{N_e}
\right\},$ we may use a QR factorization with column
pivoting (QRCP) procedure~\cite{GolubVan2013}. More
specifically, we compute
\begin{equation}
\label{eqn:qrcp}
\Psi^*\Pi = U\begin{bmatrix} R_1 & R_2
             \end{bmatrix},
\end{equation}
where $\Pi$ is a permutation matrix so that $R_{1}$ is a well
conditioned matrix. The set $\CS$ is given by the union of the nonzero
row indices of the first $N_{e}$ columns of the permutation matrix
$\Pi$.
The unitary matrix $U$ is the desired gauge matrix~\cite{DamleLinYing2015,DamleLinYing2017a}, and
$\Phi=\Psi U$ is a localized matrix. It can be
seen that Eq.~\eqref{eqn:qrcp} directly leads to a QRCP factorization of
$P$ as
\[
P\Pi = \Psi \Psi^{*}\Pi = (\Psi U)\begin{bmatrix} R_1 & R_2
\end{bmatrix},
\]
and $\Psi U$ is a matrix with orthogonal columns.


Let us apply the SCDM procedure to $H_{0}$ and its eigenfunctions
$\Psi_{0}$. With some abuse of notation, from a
pre-defined bath index set $\mathcal{I}_{b}\subset\{1,\ldots,N\}$, we may associate
the $i$-th column of $\Phi_{0}$ to the bath degrees of freedom if the $i$-th
element of $\CS$ is in $\mathcal{I}_{b}$. These selected vectors,
denoted by $\Phi_{0,b}$ form the \REV{bath} projector $P_{0,b}$. Finally, the
condition~\eqref{eqn:Psi0b} can be satisfied by solving the following
eigenvalue problem
\begin{equation}
  \Phi_{0,b}^{*} H_{0}\Phi_{0,b} C_{0,b} = C_{0,b} \Lambda_{0,b},
  \label{eqn:Phieig}
\end{equation}
and then $\Psi_{0,b}=\Phi_{0,b} C_{0,b}$. We summarize the procedure for
computing the $\Psi_{0,b}$ in Algorithm~\ref{alg:psi0b}.

\begin{algorithm}
  \caption{Using the SCDM algorithm for constructing the \REV{bath} projector.}
\label{alg:psi0b}
\begin{algorithmic}[1]
  \Statex Input: $H_{0}$, $\Psi_{0}$, $\mathcal{I}_{b}$,
  $N_{e}^{0}$.
  \Statex Output: $\Psi_{0,b}$.
  \State Perform QRCP for $\Psi_{0}^*$:
  $\Psi_{0}^*\Pi = U\begin{bmatrix} R_1 & R_2 \end{bmatrix}.$
    The set $\CS$ is given by the union
    of the nonzero row indices of the first $N_{e}^{0}$ columns of the
    permutation matrix $\Pi$.
  \State Compute $\Phi_{0} = \Psi_{0} U$. Form a submatrix
  $\Phi_{0,b}:=[\varphi_{i;0}]_{\CS_{i}\in
  \mathcal{I}_{b}}$, where $\varphi_{i;0}$ is the $i$-th column of
  $\Phi_{0}$.
  \State Solve the eigenvalue problem~\eqref{eqn:Phieig}, and compute
  $\Psi_{0,b}=\Phi_{0,b} C_{0,b}$.
\end{algorithmic}
\end{algorithm}

\REV{
\begin{rem}
  We point out that after performing the localization in Alg.~\ref{alg:psi0b},
  the vectors in the resulting bath orbitals, $\Psi_{0,b}$, are not
  eigenvalues of $H_0$. Thus, as shown in the prequel,
  the perturbative correction does not improve the asymptotic convergence rate; however, the
  preconstants are greatly reduced. In fact, as it will be shown in the numerical experiments, when the
  perturbation is relatively large, the perturbative correction associated with the rotated
  vectors $\Psi_{0,b}$ has a considerable smaller error than the one associated to the eigenvectors of $H_0$.
\end{rem}
}
\section{Numerical Examples}\label{sec:numer}
We present several examples to demonstrate the effectiveness of the PET
method and the perturbation scheme. The numerical tests were
coded in Matlab 2017b. For the solution of KSDFT in the nonlinear case,
\REV{PET and the perturbative correction are implemented within} the KSSOLV~\cite{YangMezaLeeEtAl2009}
software package. All calculations are performed in a dual socket
server with Intel Xeon E5-2670 CPU's and 386 Gb of RAM.

\subsection{Linear Case}

We first consider a simple Hamiltonian in 1D
with zero Dirichlet boundary conditions:
\begin{equation}
  H_0 = -\frac{1}{2}\frac{d^2}{dx^2} + V_0(x), \quad  V_0(x) := \sum_{i
  =1}^3 -40e^{-100(x-\tilde{x}_i)^2}, \quad x \in [-1, 1].
\end{equation}
Here the centers of the Gaussians $\tilde{x} = (-0.5, 0, 0.5)^{T}$. The
1D Laplacian is discretized with a standard 3- point stencil finite
difference scheme with $512$ grid points.

For the reference problem, we evaluate the $3$ eigenfunctions
corresponding to the lowest 3 eigenvalues.
As shown in Fig.~\ref{fig:orbitals}, the
eigenvectors $\Psi_0$  are indeed delocalized across the entire interval
$[-1,1]$. After applying the SCDM algorithm (see Alg.~\ref{alg:psi0b}),
the resulting orbitals $\Phi_{0}$ become much more localized as shown in
Fig.~\ref{fig:orbitals}.

\begin{figure}
\centering
\includegraphics[trim= 11mm 0mm 0mm 0mm,clip, width=8cm]{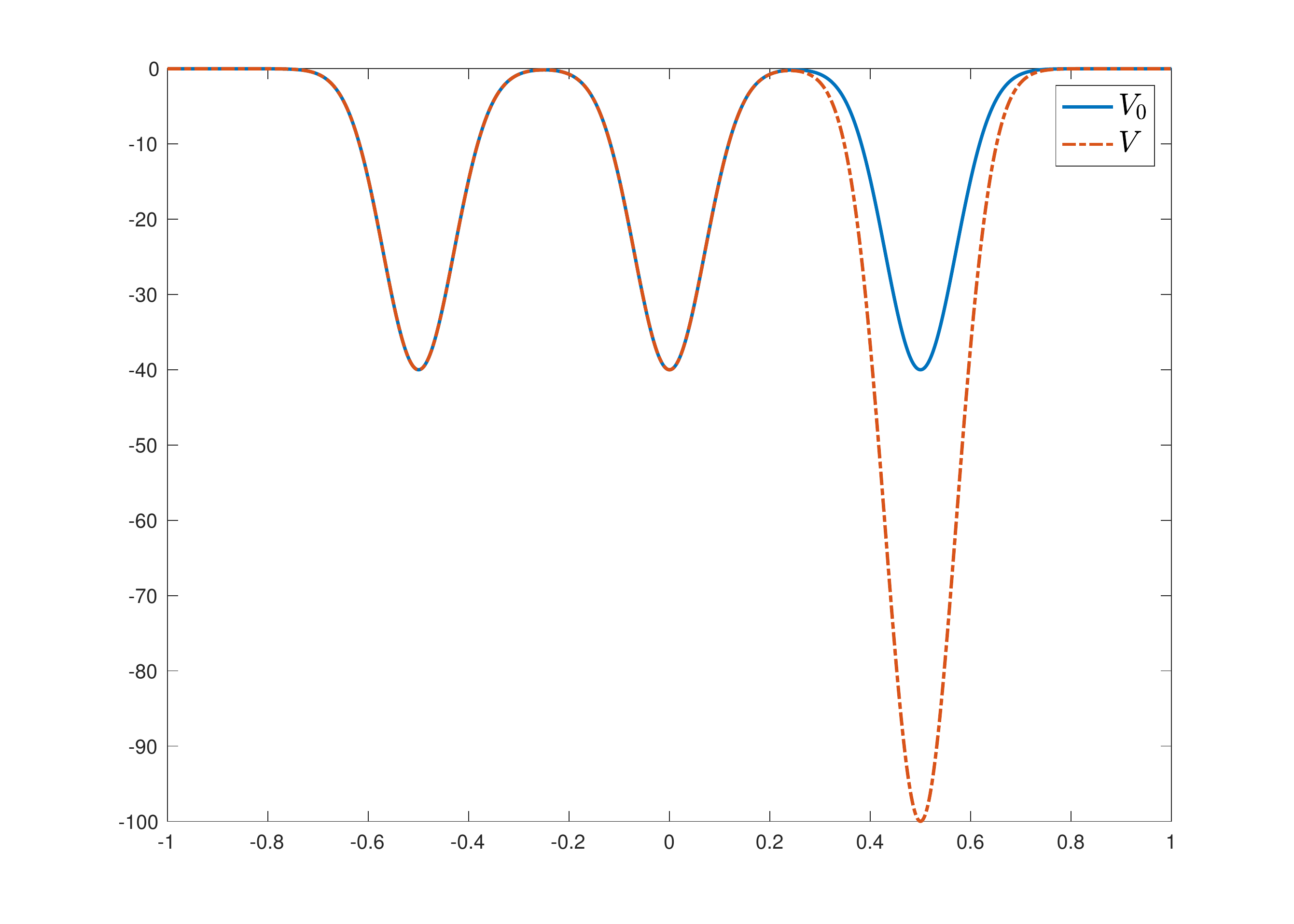}
\caption{Potential for both $H_0$ and $H$.} \label{graph:1D_potentials}
\vspace{-.3cm}
\end{figure}

\begin{figure}
\centering
\subfloat[]{\includegraphics[trim= 20mm 10mm 20mm 10mm,clip, width=6.5cm]{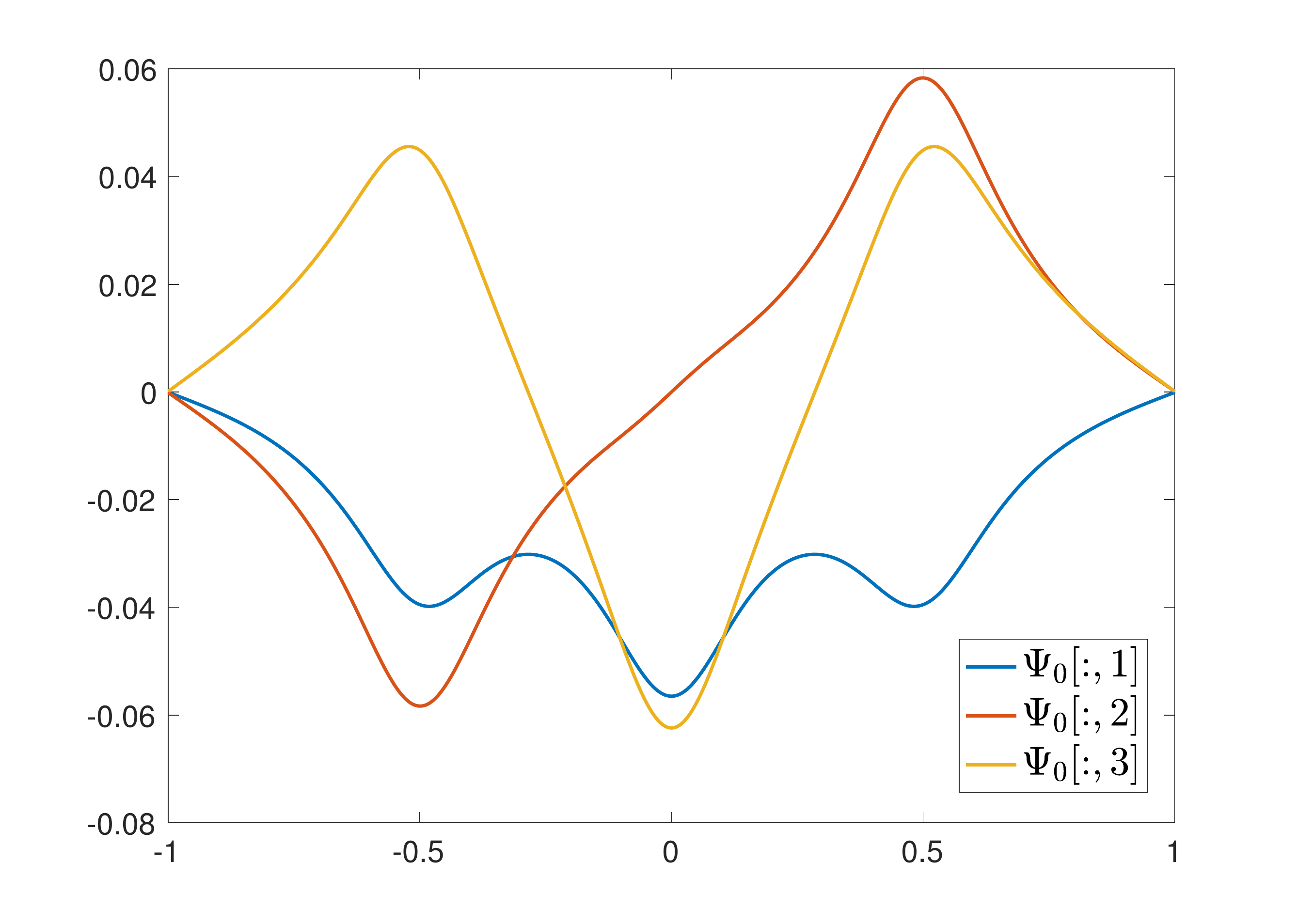}}
\subfloat[]{\includegraphics[trim= 20mm 10mm 20mm 10mm,clip, width=6.5cm]{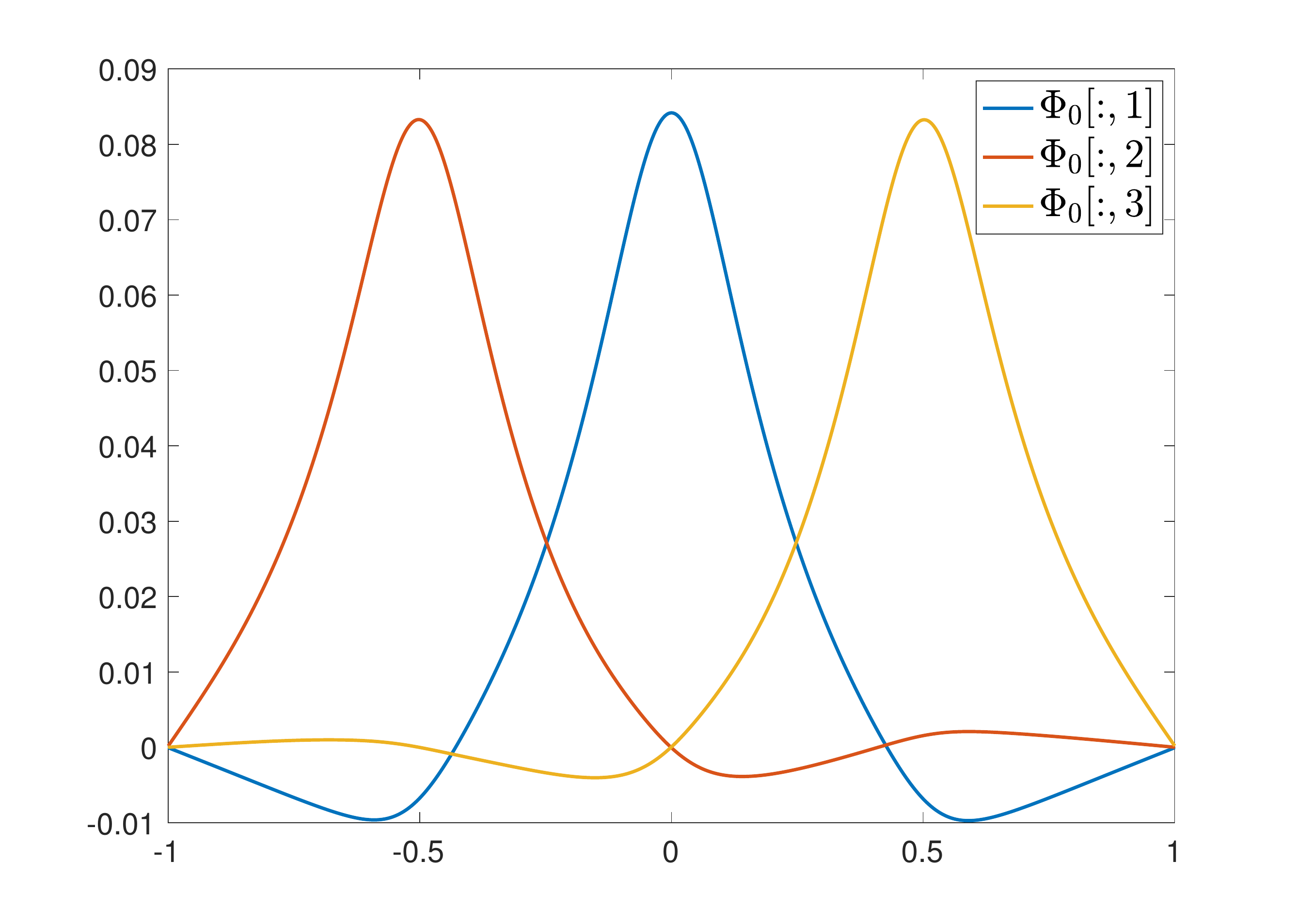}}
\caption{(a) First $3$ delocalized orbitals (columns of $\Psi_0$); (b) first $3$ localized orbitals (columns of $\Phi_0$). } \label{fig:orbitals}
\vspace{-.3cm}
\end{figure}

We define the new Hamiltonian by changing the height of the last
Gaussian function as
\begin{equation}
  H = -\frac{1}{2}\frac{d^2}{dx^2} +  V(x), \quad  V(x) : = \sum_{i =1}^2 -40e^{-100(x-\tilde{x}_i)^2} -100e^{-100(x-\tilde{x}_3)^2}.
\end{equation}

We observe that two columns in $\Phi_{0}$ are
localized far from the modified Gaussian,
and we consider them as the bath orbitals. We set $\mathcal{I}_{b} =
\{1,\ldots,  340\}$, this ensures $\Phi_{0,b} =
[ \Phi_0[:, 1], \Phi_0[:, 2] ] $.
We then compute the linear PET problem to obtain $\Psi_s$, and we build the PET
density matrix as shown in Fig.~\ref{fig:approx_density_matrices_1d}, which
is accurate up to 3 digits in relative error. Furthermore, the error is mostly
localized around the third Gaussian function as one would expect.
The relative error of the energy is $1.42\times 10^{-3}$.  We find it
remarkable that for such a small system, the solution from PET is
already very accurate despite the strong overlap of the system and bath
orbitals.

Finally, we use Alg.~\ref{alg:perturbation} to compute the perturbed density matrix,$P^{\text{pert}}$, which is
more accurate than the PET density matrix without the perturbation, $P^{\text{PET}}$, as
depicted in Fig.~\ref{fig:density_matrices_1d}. We can observe how the
perturbation decreases the error in the density matrix by taking a look
at the \REV{electron} density, $\rho = \diag(P)$, in
Fig.~\ref{fig:err_rho}. In addition, the accuracy of the energy is
improved, with its relative error reduced from  $1.42\times 10^{-3}$ to $1.01\times 10^{-4}$.
If we increase the bath size from $1$ to $2$, the accuracy of the energy
is improved further to $7.15\times 10^{-5}$ and $2.12\times 10^{-5}$, without and with
the perturbative correction, respectively.

\begin{figure}
\centering
\subfloat[]{\includegraphics[trim= 20mm 70mm 15mm 60mm,clip, width=6.25cm]{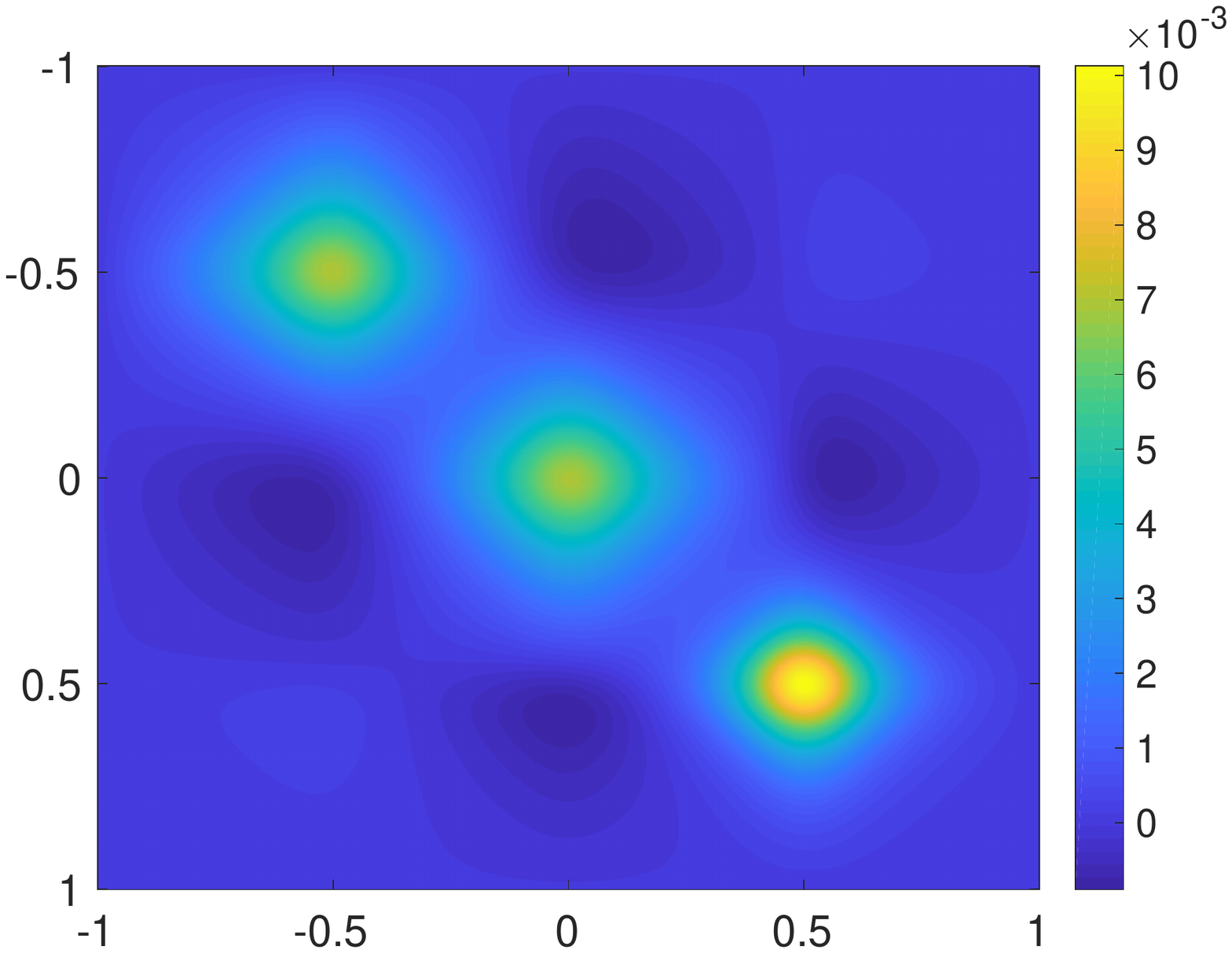}}
\subfloat[]{\includegraphics[trim= 20mm 70mm 15mm 60mm,clip, width=6.25cm]{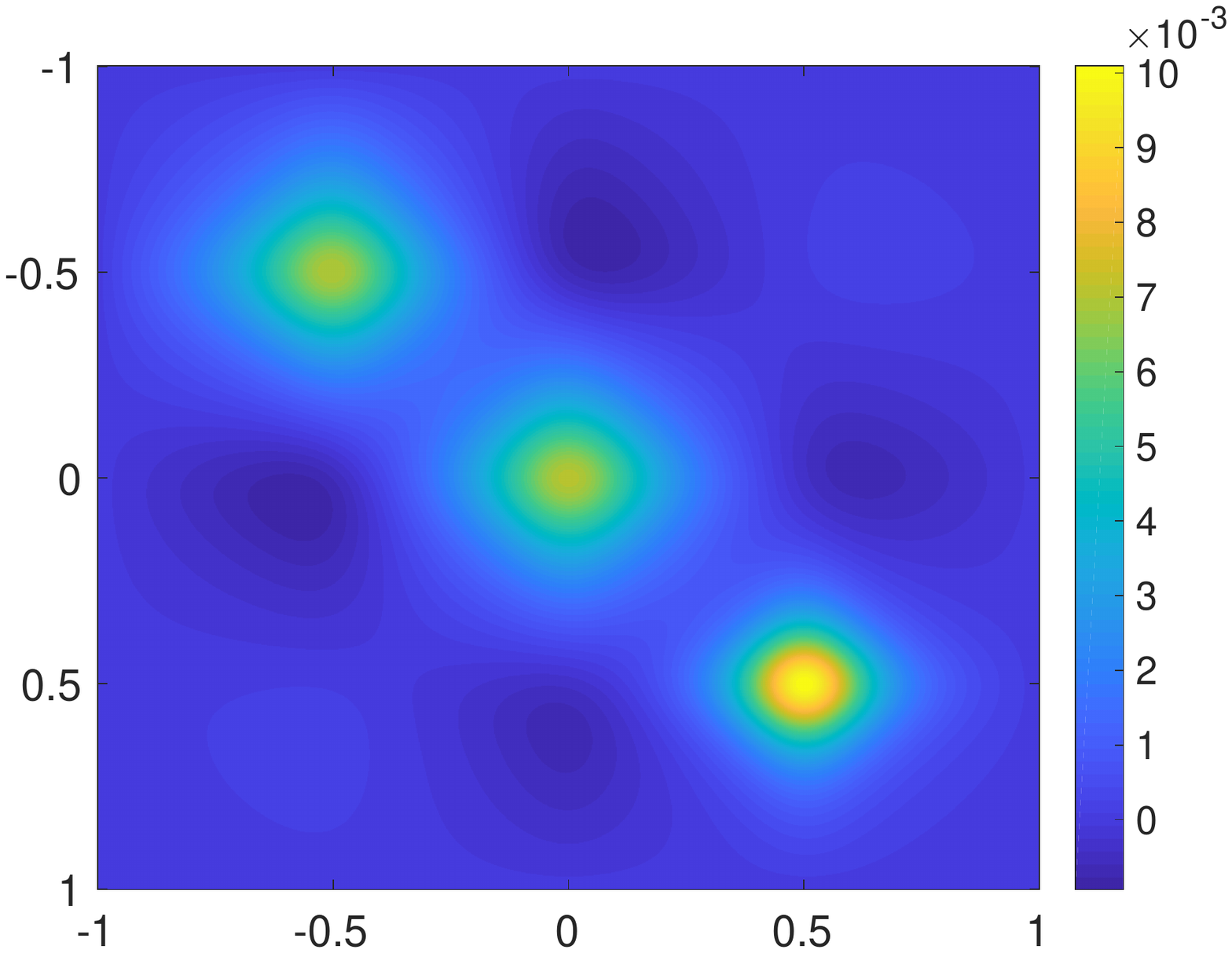}}
\caption{(a) Exact density matrix; (b) PET density matrix} \label{fig:approx_density_matrices_1d}
\vspace{-.3cm}
\end{figure}

\begin{figure}
\centering
\subfloat[]{\includegraphics[trim= 20mm 70mm 15mm 60mm,clip, width=6.25cm]{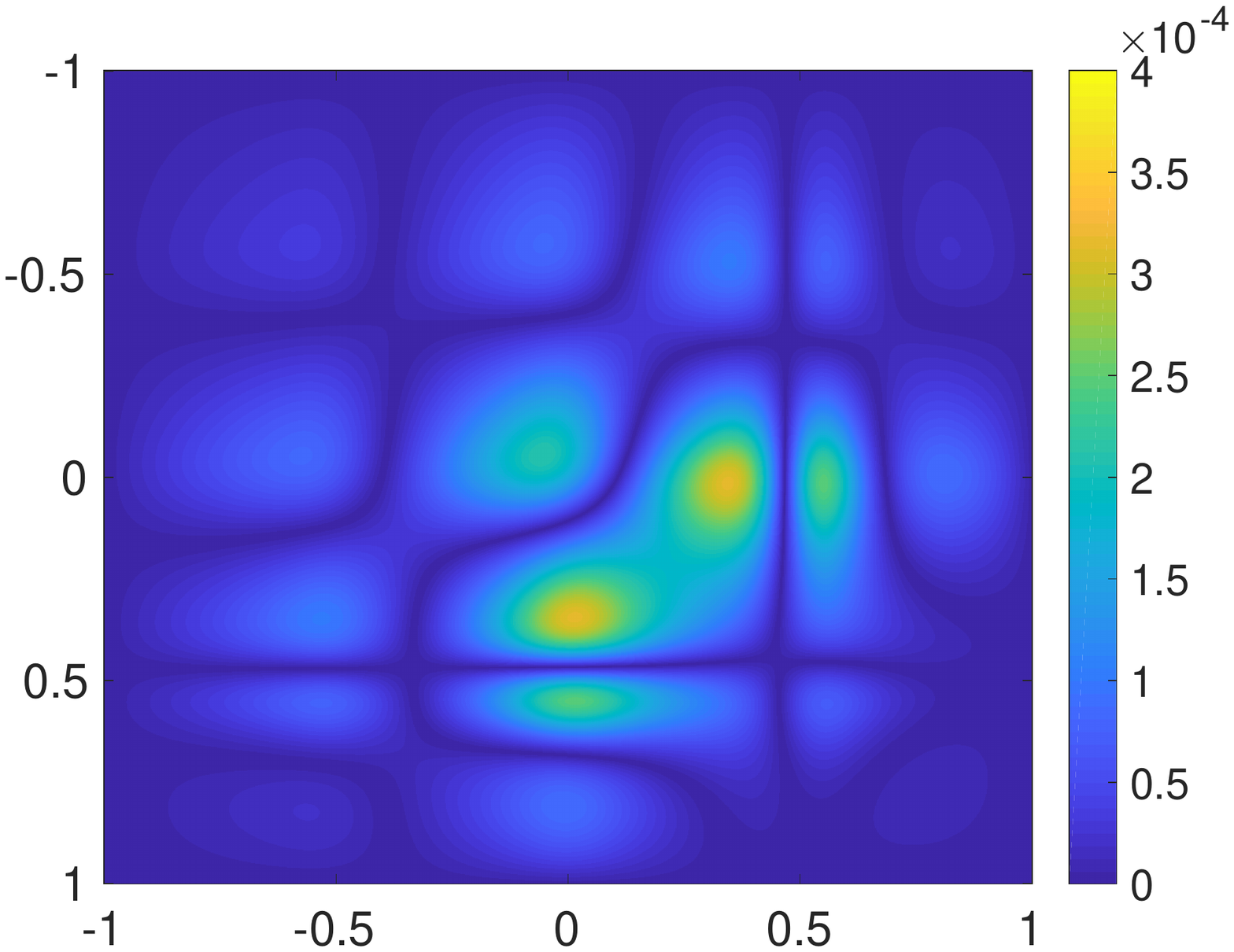}}
\subfloat[]{\includegraphics[trim= 20mm 70mm 15mm 60mm,clip, width=6.25cm]{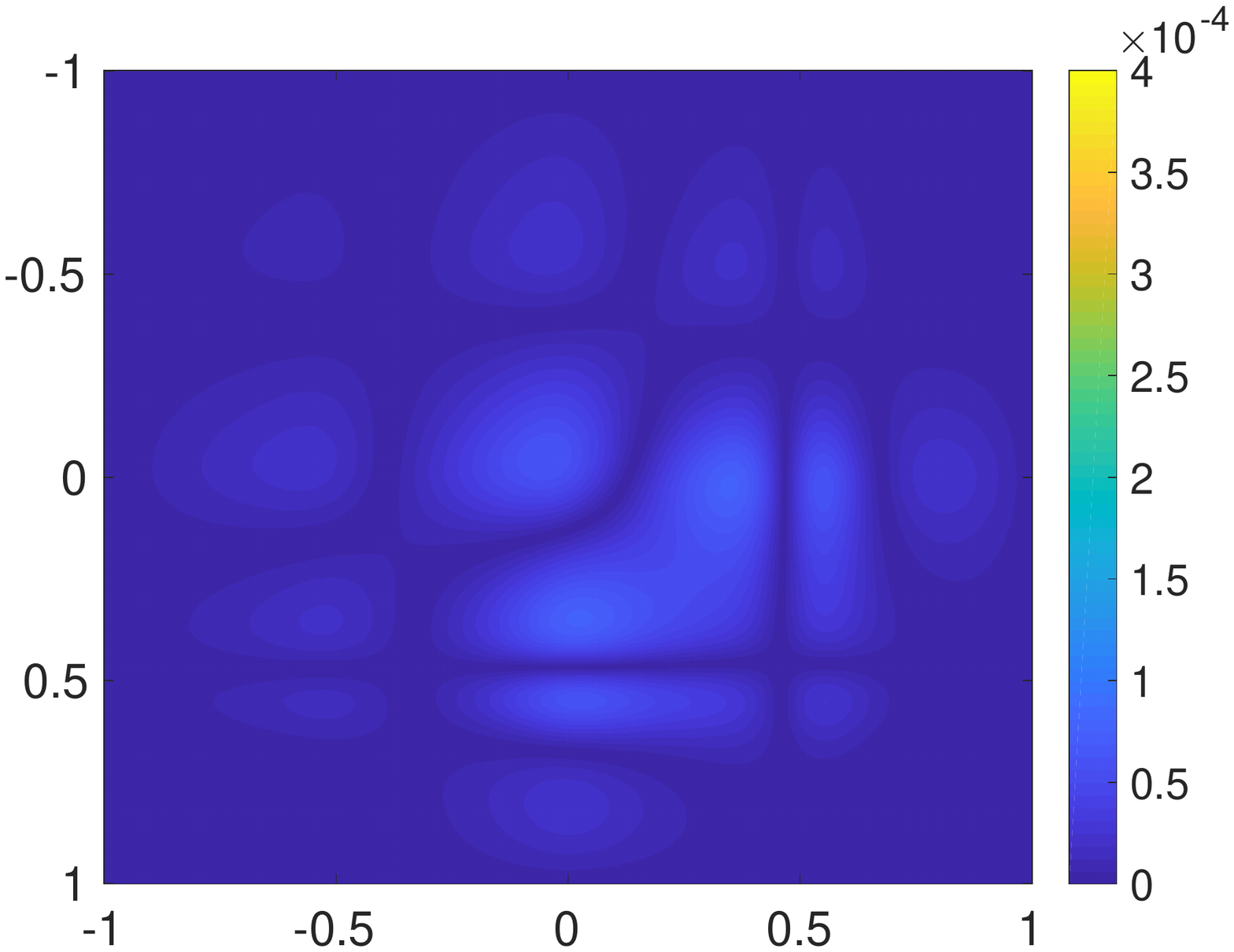}}
\caption{Error of the density matrix with respect to the reference
answer: (a) PET density matrix; (b) PET density
matrix plus the first order perturbative correction.}
\label{fig:density_matrices_1d}
\vspace{-.3cm}
\end{figure}

\begin{figure}
\centering
\includegraphics[trim= 0mm 0mm 0mm 0mm,clip, width=9.5cm]{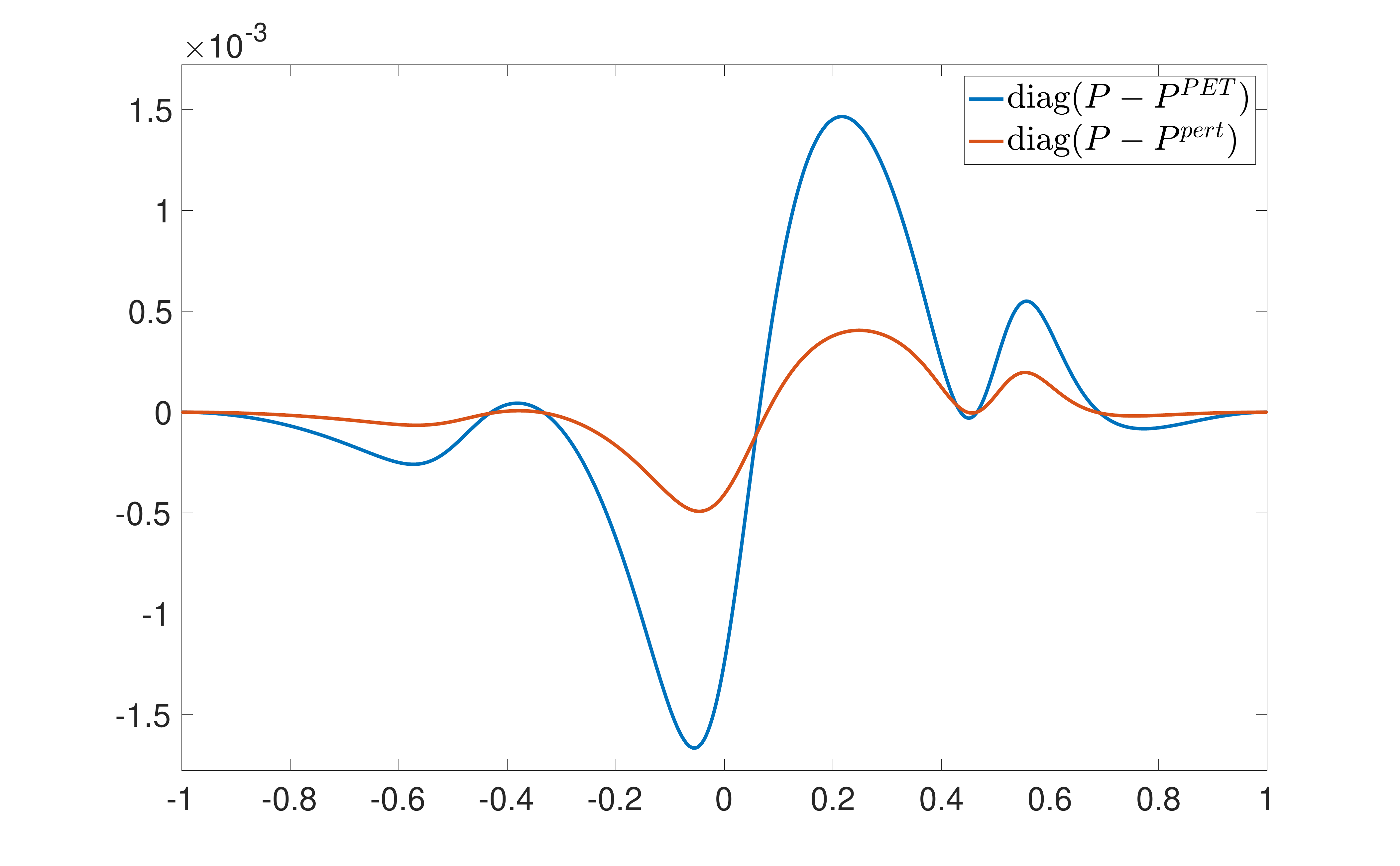}
\caption{Relative error of the electron density using PET density matrix
and the perturbed PET density matrix.} \label{fig:err_rho}
\end{figure}

\REV{In order to showcase the asymptotic convergence of PET and the first
order perturbation discussed at the end of
Section~\ref{sec:perturbationlinear},
we introduce a family of perturbed Hamiltonians as
\begin{equation}
  H_{\epsilon} = -\frac{1}{2}\frac{d^2}{dx^2} +  V_{\epsilon}(x), \quad  V_{\epsilon}(x) : = \sum_{i =1}^2 -40e^{-100(x-\tilde{x}_i)^2} -(40 + \epsilon)e^{-100(x-\tilde{x}_3)^2}.
\end{equation}
Then, in an analogous fashion as above, we compute the PET
approximation and the associated perturbative correction for each
Hamiltonian as $\epsilon \rightarrow 0$. Fig.~\ref{fig:err_PET_pert}
(a) shows the error of approximation of the density matrix and the
energy as the $\epsilon$ tends to zero. As discussed in
Section~\ref{sec:perturbationlinear}, our first order perturbation
is computed with respect to $\delta H^{\text{PET}}_W$, which can
remain to be of $\Or(1)$ even if $\delta H = 0$.}

\REV{On the one hand, when we use localized orbitals to define the system and bath orbitals, Eq.~\eqref{eqn:special_case_condition} is not satisfied. In this case the error of the PET density
matrix and energy decay as $\Or(\epsilon)$ and $\Or(\epsilon^2)$,
respectively as shown by Fig.~\ref{fig:err_PET_pert} (a). Although, the
asymptotic convergence after first order correction remains unchanged,
the preconstants are significantly reduced by one to two orders of
magnitude compared to the results of the PET.}

\REV{On the other hand, when we use the delocalized eigenfunctions to
define the system and bath orbitals, Eq.~\eqref{eqn:special_case_condition} is satisfied.
In such a case, Fig.~\ref{fig:err_PET_pert} (b) shows that the error of the approximate density matrix and energy after the
perturbation correction decay as $\Or(\epsilon^2)$ and $\Or(\epsilon^3)$,
which agrees with results from the standard RS perturbation theory. However,
the preconstants are larger than those in
Fig.~\ref{fig:err_PET_pert} (a). In particular, we can observe from
Fig.~\ref{fig:err_PET_pert} that when the perturbation is relatively large,
partitioning the system with spatially localized orbitals indeed improves
the accuracy of PET, specially when the perturbative correction is used.
}

\begin{figure}
\centering
\subfloat[]{\includegraphics[trim= 10mm 0mm 15mm 15mm,clip,
width=6.3cm]{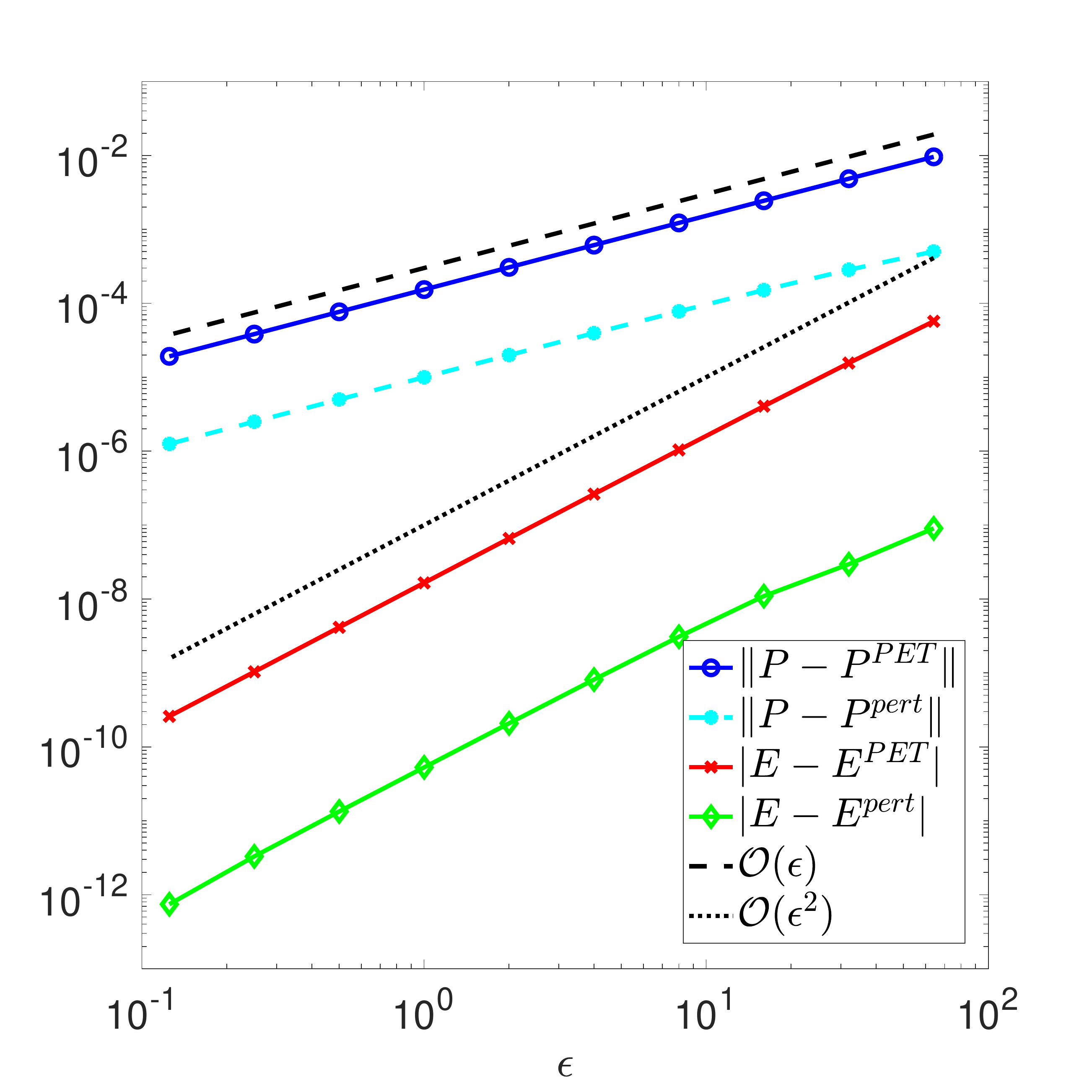}}
\subfloat[]{\includegraphics[trim= 10mm 0mm 15mm 15mm,clip,
width=6.3cm]{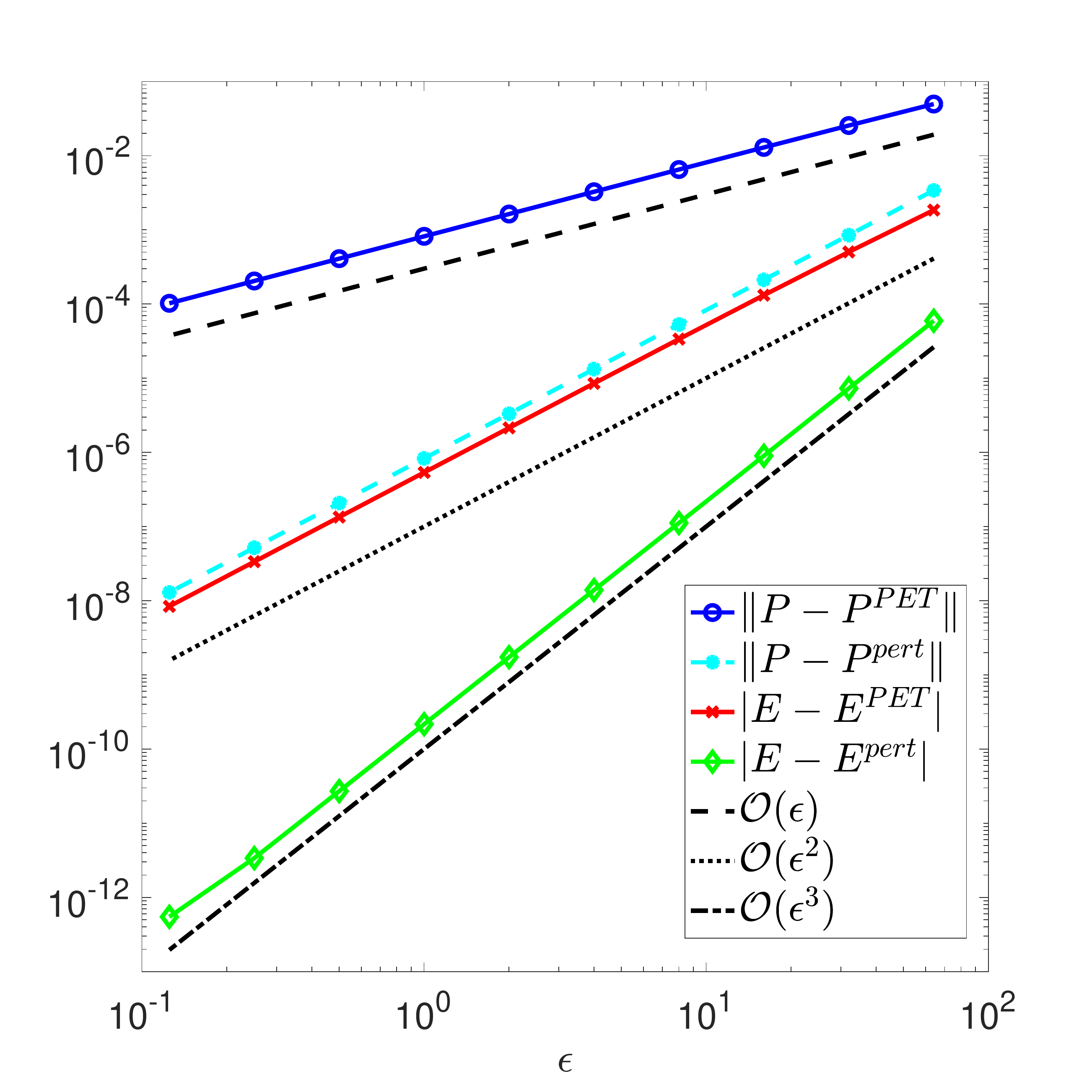}}
\caption{\REV{Error of the \REV{approximate} density matrix and
energy, with bath orbitals defined using (a)
localized orbitals, and (b) delocalized eigenfunctions.}} \label{fig:err_PET_pert}
\end{figure}

\subsection{Nonlinear Case}
For KSDFT calculations, we modified the KSSOLV software
package~\cite{YangMezaLeeEtAl2009} to solve the PET
equations~\eqref{eqn:nonlinear_pet} and to obtain the perturbation
correction. KSSOLV uses a pseudo spectral discretization with the plane
wave set.  All the operators, including Hamiltonian and projection
operators, are efficiently implemented in a matrix-free fashion. Within
each self-consistent field iteration, we use the locally optimal block
preconditioned conjugate gradient method (LOBPCG)~\cite{Knyazev2001} to
solve the linear eigenvalue problems. For the perturbative correction,
we use the GMRES~\cite{SaadSchultz1986} method
with a preconditioner \cite{TeterPayneAllan1989} implemented via fast Fourier
transforms (FFTs).

\subsubsection{Silane}

We first consider a simple molecule, silane (SiH$_{4}$), whose electron
density in shown in Fig.~\ref{fig:SiH4} and we performed three different
numerical experiments to showcase the accuracy of the method. Our
reference system is the silane molecule from an equilibrium configuration.
The bath-system partition is shown in Fig.~\ref{fig:SiH4}, in which
we can observe that we fixed three orbitals as the bath, \REV{induced} by $\mathcal{I}_b$,
and the system part, which is delimited by a pointed red line is considered
as the forth orbital \REV{induced} by $\mathcal{I}_s$. We performed three different
modifications to the atom associated with the fourth orbital:

\begin{figure}
\centering
\subfloat[]{\includegraphics[trim= 0mm 0mm 0mm 0mm,clip, width=6.0cm]{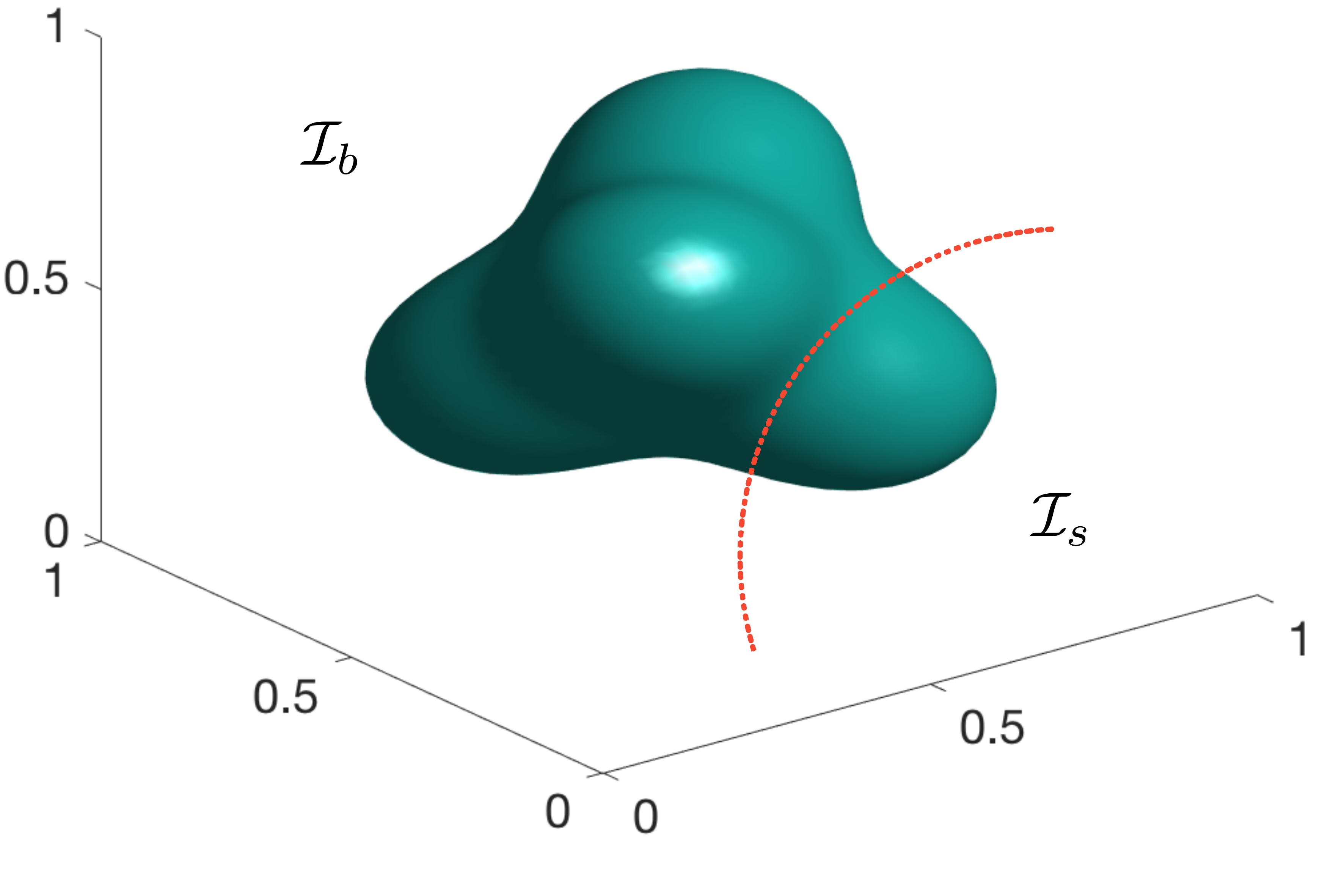}}
\subfloat[]{\includegraphics[trim= 0mm 0mm 0mm 0mm,clip, width=6.0cm]{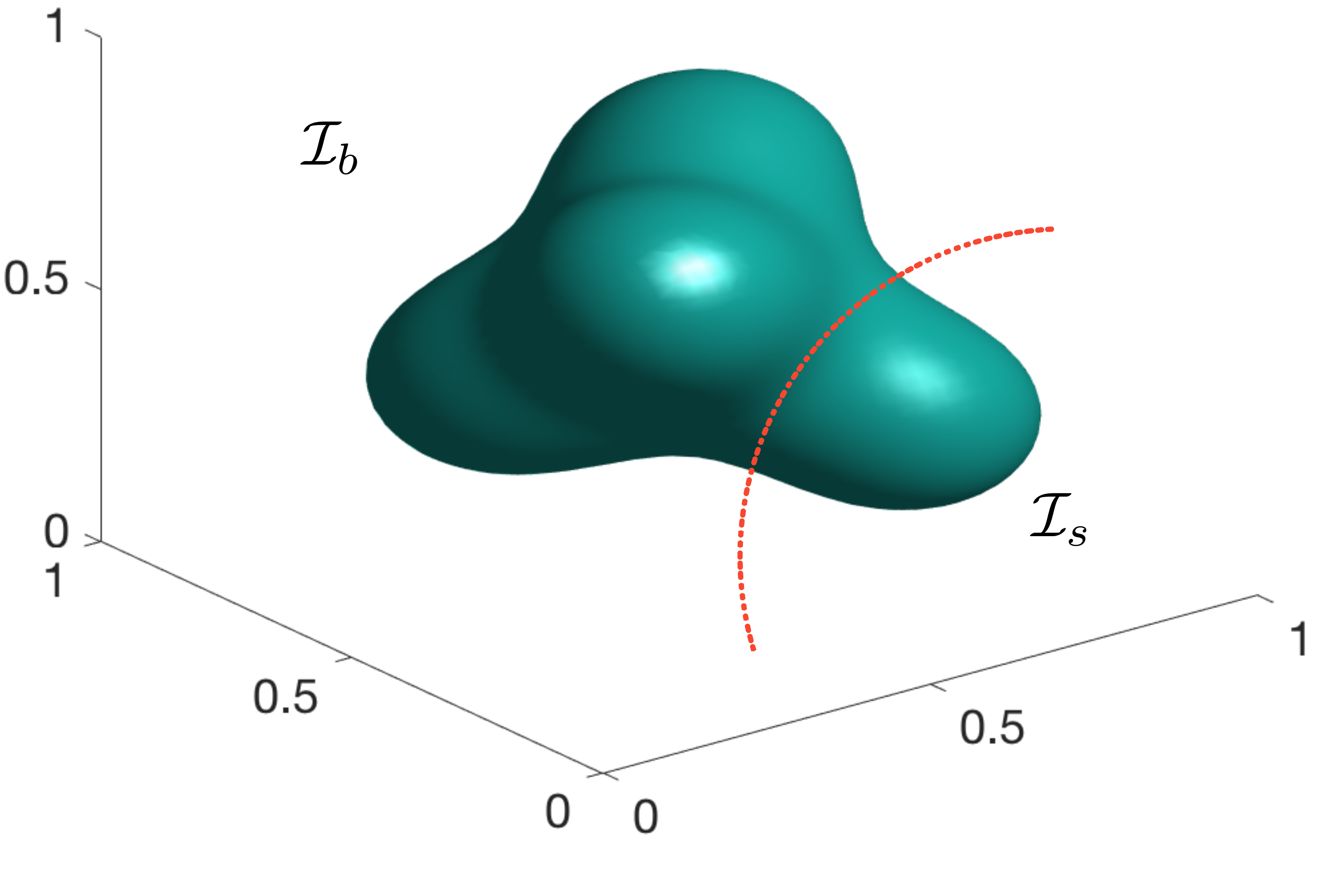}}\\
\subfloat[]{\includegraphics[trim= 0mm 0mm 0mm 0mm,clip, width=6.0cm]{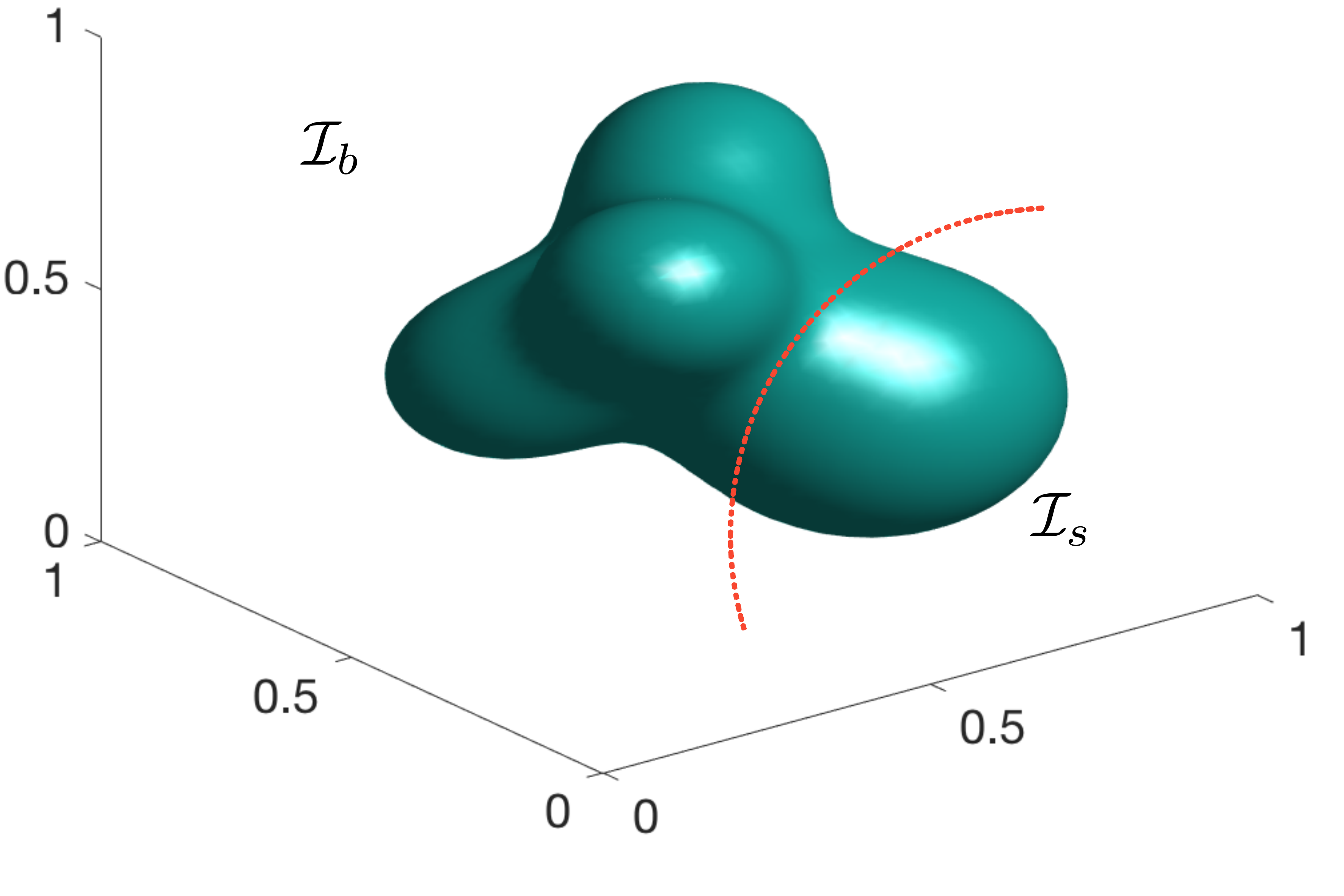}}
\subfloat[]{\includegraphics[trim= 0mm 0mm 0mm 0mm,clip, width=6.0cm]{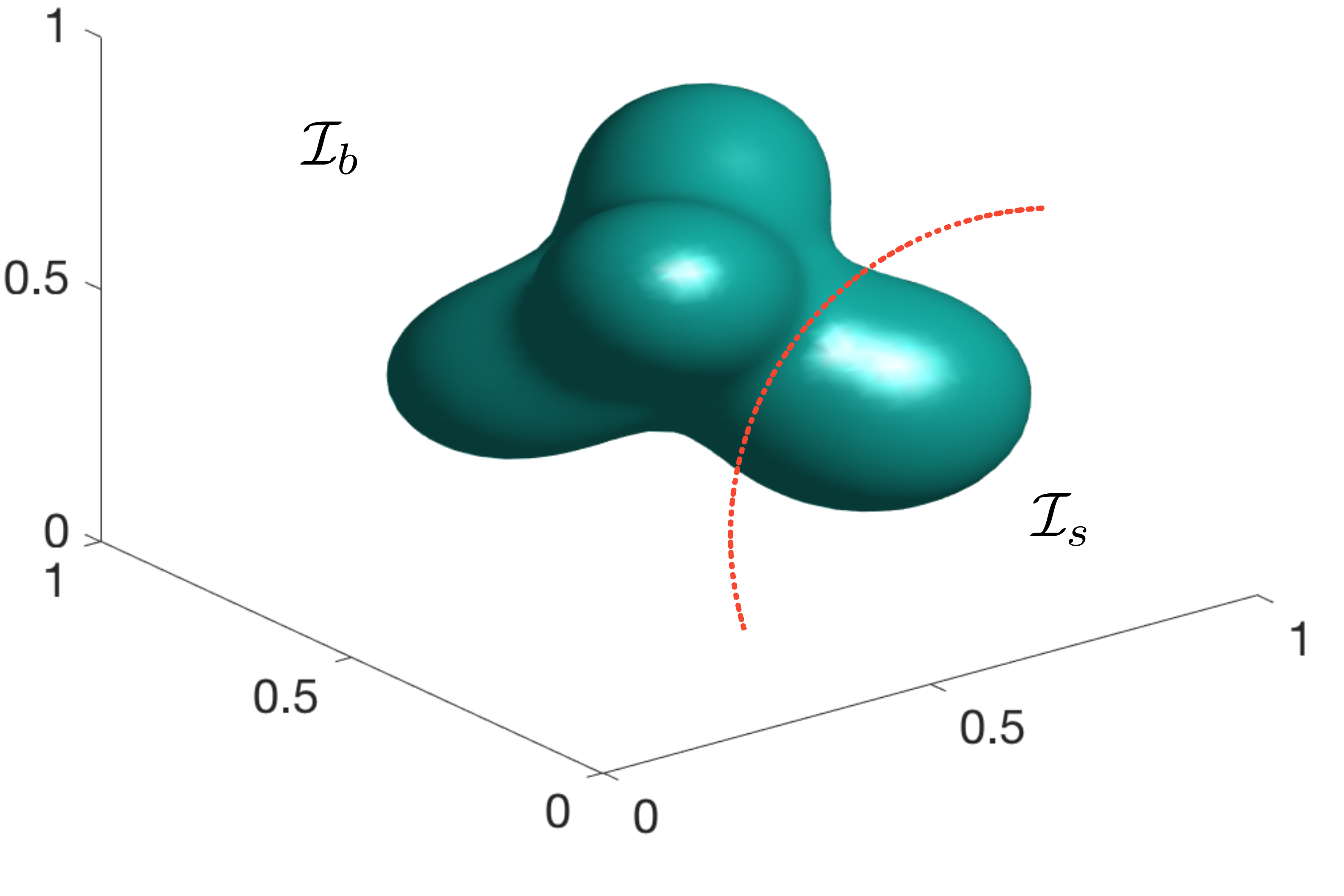}}
\caption{(a) Electron density of the silane molecule, (b) electron density of the SiH$_4$ molecule with one hydrogen bond elongated, (c) electron density of SiH$_3$Cl, and, (d) electron density of SiH$_3$F. } \label{fig:SiH4}
\end{figure}

\begin{itemize}
\item we elongate one hydrogen bond by $25\%$,
\item we replace a hydrogen atom by a chlorine atom (Cl),
\item we replace a hydrogen atom by a fluorine atom (F).
\end{itemize}
\REV{Note that in the last two examples, the number of valence orbitals
in the reference system is $4$, while the number of valence orbitals
in the perturbed systems are both $7$.} Hence the perturbation
introduced by the atom substitution is very large, especially
compared to the small size of the molecule under study here.

We compare the results from PET and the perturbed PET against a
reference solution obtained directly by solving the system in KSSOLV. In particular, we examine the relative error of the density matrices,
the relative error of the \REV{electron} density,  the absolute error of the energy, and
the absolute error of the atomic force at the modified location. All
results are reported in atomic units. In particular, the unit of the energy is
hartree, and the unit of the atomic force is hartree / bohr.
In this case, the energy for PET was computed using the functional in \eqref{eq:energy_pet_non_linear}.
For the perturbed solution, \REV{we used \eqref{eqn:E_pert}. We used a second order finite difference scheme
to compute the forces at the perturbed atom.}

The results for each of the experiments are shown in Tables
\ref{table:numerical_results_P_rho_SiH4} and
\ref{table:numerical_results_E_F_SiH4}. We can observe that the
perturbation effectively reduces the error of the density matrix,
the electron density, the energy and atomic force.
The only exception is the force of SiH$_3$F, which becomes
coincidentally accurate for the PET, but the error after applying the
perturbation theory is still around $10^{-3}$ au. Even for such a small
system, after applying the perturbation formula, the error of the energy
and force already reaches chemical accuracy.
\begin{table}
    \begin{center}
        \begin{tabular}{|c|c|c|c|c|}
            \hline
            Experiment & $ \frac{|| P - P^{\text{PET}}||}{||P||}$
                       & $\frac{|| P - P^{\scriptsize{\mbox{pert}}} ||}{||P||}$
                       & $\frac{||\rho^{\text{PET}} - \rho||}{||\rho||} $
                       &  $\frac{||\rho^{\scriptsize{\mbox{pert}}} - \rho||}{||\rho||}$ \\
            \hline
            Elongated   & $6.03\times 10^{-2}$  & $1.40\times 10^{-2}$ & $1.24\times 10^{-2}$  &  $8.00\times 10^{-3}$   \\
            SiH$_3$Cl   & $7.70\times 10^{-2}$  & $1.74\times 10^{-2}$ & $1.51\times 10^{-2}$  &  $6.71\times 10^{-3}$   \\
            SiH$_3$F    & $9.12\times 10^{-2}$  & $2.09\times 10^{-2}$ & $6.64\times 10^{-3}$  &  $4.72\times 10^{-3}$  \\
            \hline
        \end{tabular}
    \end{center}
    \caption{ Errors of the density matrices and electron densities for
    the different perturbation of the SiH$_4$ molecule.}   \label{table:numerical_results_P_rho_SiH4}
    \vspace{-.3cm}
\end{table}

\begin{table}
    \begin{center}
        \begin{tabular}{|c|c|c|c|c|}
            \hline
            Experiment & $ E - E^{\text{PET}}$
                       & $ E - E^{\scriptsize{\mbox{pert}}}$
                       & $ F - F^{\text{PET}}$
                       & $ F - F^{\scriptsize{\mbox{pert}}}$  \\
            \hline
            Elongated   & $5.98\times 10^{-3}$  & $2.29\times 10^{-4}$ & $1.51\times 10^{-2}$ & $1.33\times 10^{-3}$   \\
            SiH$_3$Cl   & $1.84\times 10^{-2}$  & $1.94\times 10^{-3}$ & $1.89\times 10^{-2}$ & $2.72\times 10^{-3}$   \\
            SiH$_3$F    & $1.66\times 10^{-2}$  & $1.33\times 10^{-3}$ & $9.29\times 10^{-5}$ & $1.13\times 10^{-3}$   \\
            \hline
        \end{tabular}
    \end{center}
    \caption{ Errors for the different perturbation of the SiH$_4$ molecule.}   \label{table:numerical_results_E_F_SiH4}
    \vspace{-.3cm}
\end{table}

\subsubsection{Benzene}

In this example we show the performance of the method for a benzene
molecule (C$_6$H$_6$), whose \REV{electron} density is shown in Fig.~\ref{fig:rho_bencene} (a).
We substitute one of hydrogen atoms by a
fluorine atom, \REV{whose electron}  density is shown in Fig.~\ref{fig:rho_bencene} (b).
The benzene molecule has a total of 15 valence orbitals.
\REV{ To determine the partitions, we created a sphere centered at the replaced atom
and we performed the localization using Alg.~\ref{alg:psi0b} where we labeled the different
localized orbitals depending on the position of their associated pivots (from Alg.~\ref{alg:psi0b}).
In particular, we labeled the orbitals whose pivots were within the sphere as system orbitals, and
the rest as bath orbitals.
The Tables \ref{table:numerical_results_P_rho_bencene} and
\ref{table:numerical_results_E_F_bencene} were generated by incrementally increasing
the radius of the sphere, until obtaining $N_s$ system orbitals}
. The different partitions are depicted in Fig.~\ref{fig:rho_bencene}, for $N_s =1,4$
and $6$, in which the segmented red line indicates the boundary between
the bath and system partitions.
Tables \ref{table:numerical_results_P_rho_bencene} and
\ref{table:numerical_results_E_F_bencene} show the errors of the density
matrix and the electron density, as well as the energy and the atomic
force. \REV{We can observe a systematically decrease on the errors as the bath
size decreases, \ie $N_s$, the system size increases.}
When the system size is $7$, the error of the energy and force after
perturbative correction is already below the chemical accuracy and is as small as $1.02\times 10^{-4}$ and $4.17\times 10^{-4}$ au, respectively.

\begin{figure}
\centering
\subfloat[]{\includegraphics[trim= 0mm 0mm 0mm 0mm,clip, width=6.0cm]{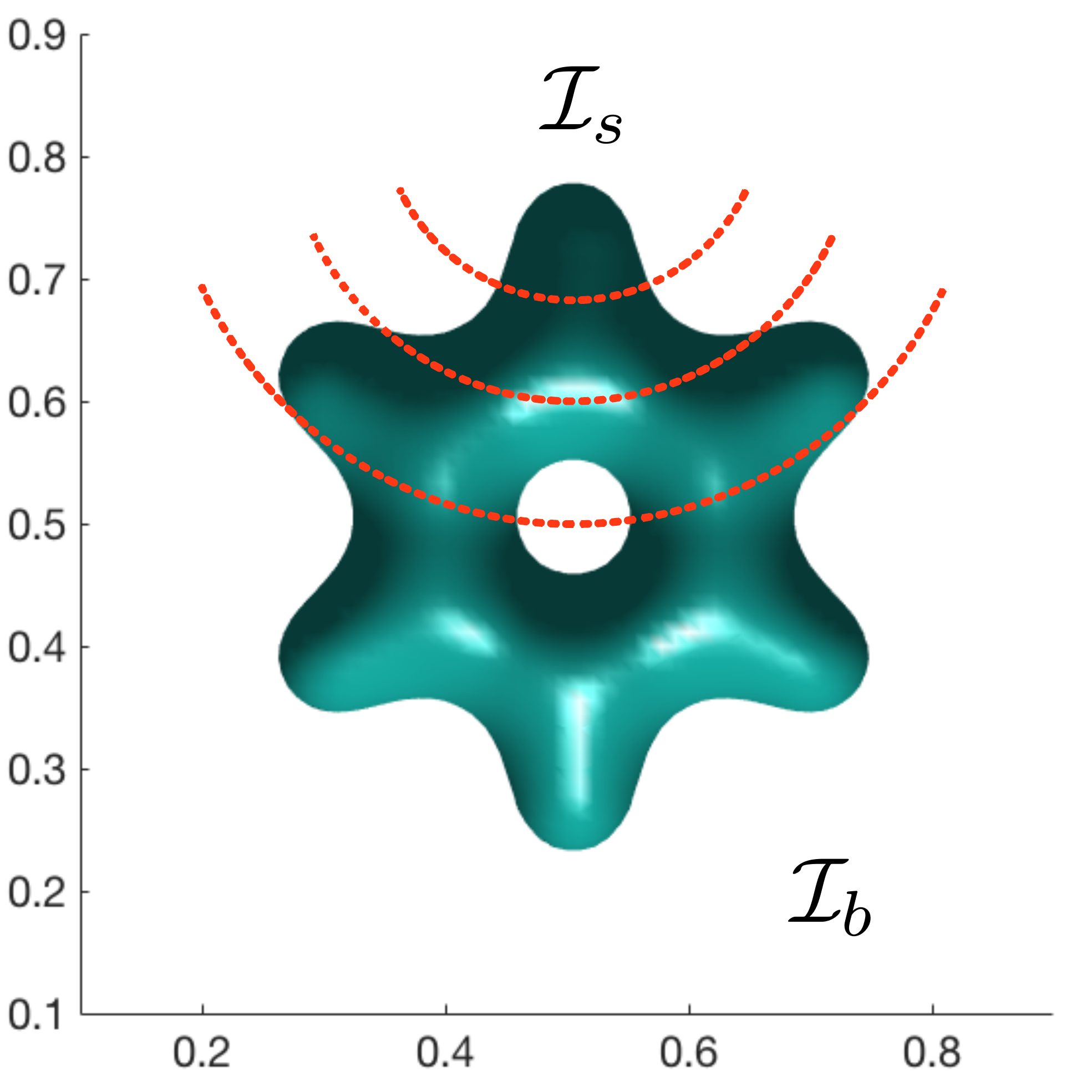}}
\subfloat[]{\includegraphics[trim= 0mm 0mm 0mm 0mm,clip, width=6.0cm]{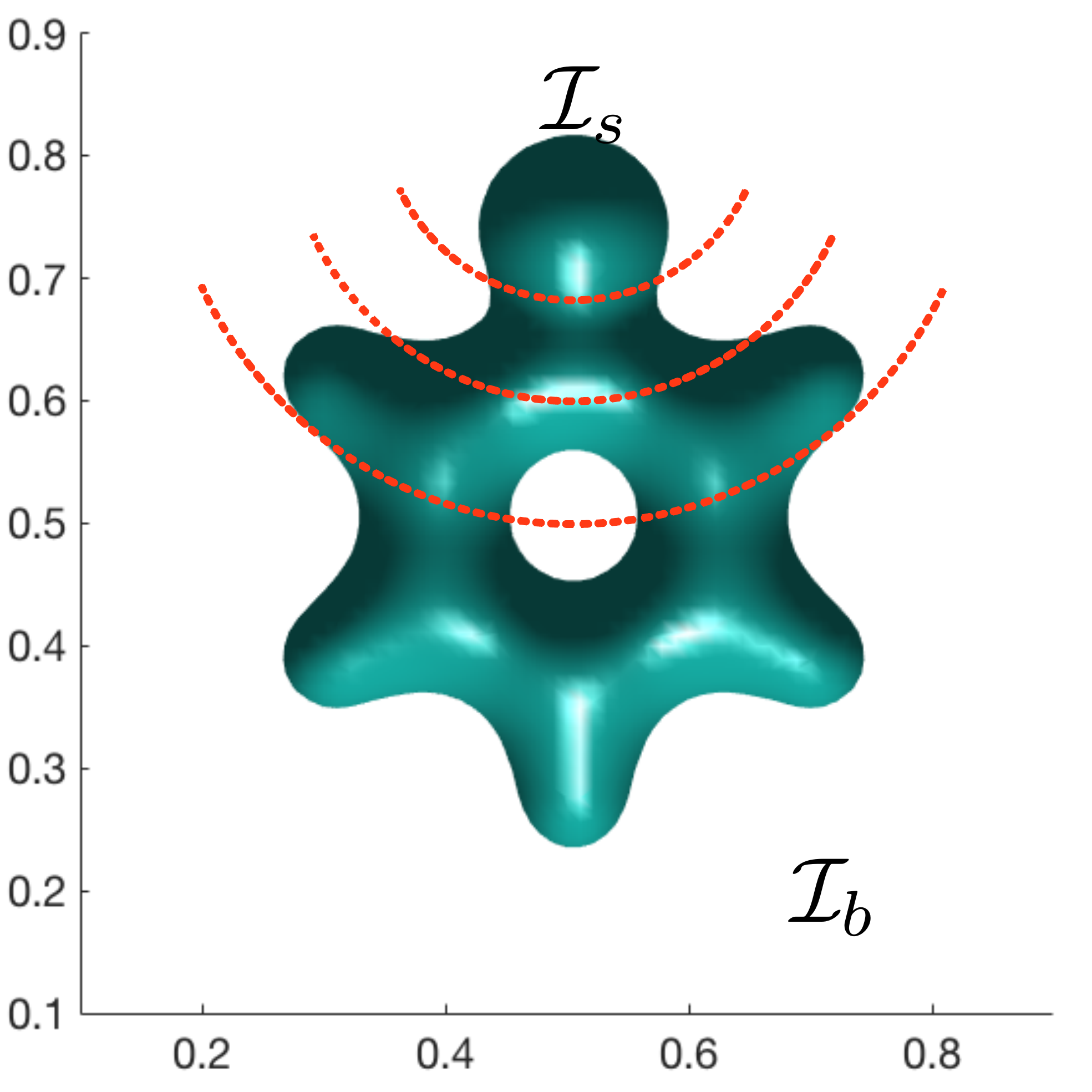}}
\caption{(a) Electron density for the benzene molecule,
and (b) the benzene molecule with an hydrogen atom replace by a fluorine
.} \label{fig:rho_bencene}
\end{figure}

\begin{table}
    \begin{center}
        \begin{tabular}{|c|c|c|c|c|}
            \hline
            $N_{s}$ & $ \frac{|| P - P^{\text{PET}}||}{||P||}$
                    & $ \frac{|| P - P^{\scriptsize{\mbox{pert}}} ||}{||P||}$
                    & $ \frac{||\rho^{\text{PET}} - \rho||}{||\rho||} $
                    & $ \frac{||\rho^{\scriptsize{\mbox{pert}}} - \rho||}{||\rho||} $ \\
            \hline
            1   & $9.41\times 10^{-2}$  &  $5.65\times 10^{-2}$ & $1.36\times 10^{-2}$  &  $2.04\times 10^{-2}$  \\
            3   & $6.32\times 10^{-2}$  &  $1.93\times 10^{-2}$ & $5.56\times 10^{-3}$  &  $6.70\times 10^{-3}$  \\
            5   & $6.46\times 10^{-2}$  &  $1.61\times 10^{-2}$ & $4.41\times 10^{-3}$  &  $3.27\times 10^{-3}$  \\
            7   & $5.04\times 10^{-2}$  &  $1.13\times 10^{-2}$ & $2.93\times 10^{-3}$  &  $1.72\times 10^{-3}$  \\
            9   & $2.98\times 10^{-2}$  &  $4.12\times 10^{-3}$ & $1.56\times 10^{-3}$  &  $1.32\times 10^{-3}$  \\
            \hline
        \end{tabular}
    \end{center}
    \caption{Errors of the density matrix and electron density for the
    benzene molecule for different bath (and system) sizes.}   \label{table:numerical_results_P_rho_bencene}
    \vspace{-.3cm}
\end{table}

\begin{table}
    \begin{center}
        \begin{tabular}{|c|c|c|c|c|}
            \hline
            $N_{s}$  & $ E - E^{\text{PET}}$
                     & $ E - E^{\scriptsize{\mbox{pert}}}$
                     & $ F - F^{\text{PET}}$
                     & $ F - F^{\scriptsize{\mbox{pert}}}$  \\
            \hline
            1   &  $4.05\times 10^{-2}$  &  $1.33\times 10^{-2}$ & $2.69\times 10^{-2}$  & $3.16\times 10^{-2}$  \\
            3   &  $1.87\times 10^{-2}$  &  $3.42\times 10^{-3}$ & $1.73\times 10^{-2}$  & $9.98\times 10^{-3}$  \\
            5   &  $1.20\times 10^{-2}$  &  $1.78\times 10^{-3}$ & $4.24\times 10^{-3}$  & $6.78\times 10^{-3}$  \\
            7   &  $7.89\times 10^{-3}$  &  $1.02\times 10^{-4}$ & $3.73\times 10^{-3}$  & $4.17\times 10^{-4}$  \\
            9   &  $3.02\times 10^{-3}$  &  $3.16\times 10^{-5}$ & $4.05\times 10^{-3}$  & $3.31\times 10^{-4}$  \\
            11  &  $2.81\times 10^{-3}$  &  $6.71\times 10^{-5}$ & $3.68\times 10^{-3}$  & $3.50\times 10^{-4}$  \\
            \hline
        \end{tabular}
    \end{center}
    \caption{Errors of the energy and forces for the benzene molecule for different bath (and system) sizes.}   \label{table:numerical_results_E_F_bencene}
    \vspace{-.3cm}
\end{table}

\subsubsection{Anthracene}

Finally we test our algorithm with the anthracene molecule
(C$_{14}$H$_{10}$), which is composed of $3$ benzene rings positioned
longitudinally. Following the same procedure as with
the benzene molecule, we compute the solution to the Kohn-Sham equations, whose
\REV{electron} density is shown in Fig.~\ref{fig:rho_anthracene} , and we replace one
hydrogen atom in one of the extremal rings by a fluorine atom (Fig.~\ref{fig:rho_anthracene} (b)). From the
total 33 orbitals for the anthracene, we define the bath orbitals and systems orbitals
following the \REV{same procedure as for the benzene molecule}
The partitions for $N_s = 1, 4,$ and $6$ are depicted in Fig.~\ref{fig:rho_anthracene},
where the different segmented red lines indicate the boundary between the two
partitions, in which they are denoted by $\mathcal{I}_b$ and $\mathcal{I}_s$,
for the bath and for the system respectively.

We compute the PET approximation and its perturbative correction for several
different bath sizes as shown in
Table~\ref{table:numerical_results_P_rho_anthracene}. From
Table~\ref{table:numerical_results_P_rho_anthracene} we can clearly observe
the error of all quantities decrease systematically with respect to the
increase of the system size, and the perturbation method significantly
increases the accuracy over the PET results. In particular, when the
\REV{system} size is $7$, chemical accuracy is achieved after the perturbative
correction is applied.

\begin{figure}
\centering
\subfloat[]{\includegraphics[trim= 0mm 0mm 0mm 0mm,clip, width=6.0cm]{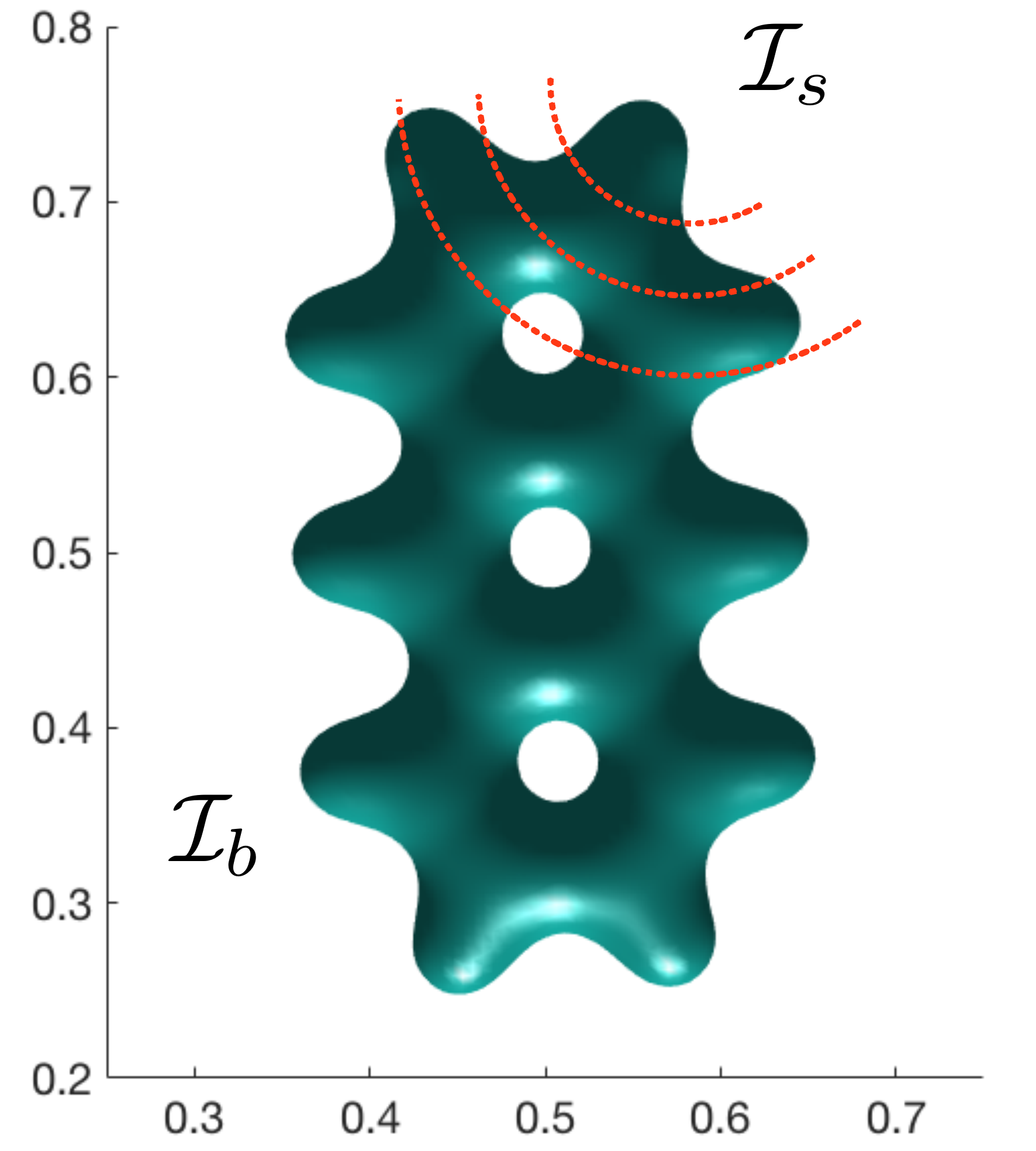}}
\subfloat[]{\includegraphics[trim= 0mm 0mm 0mm 0mm,clip, width=6.0cm]{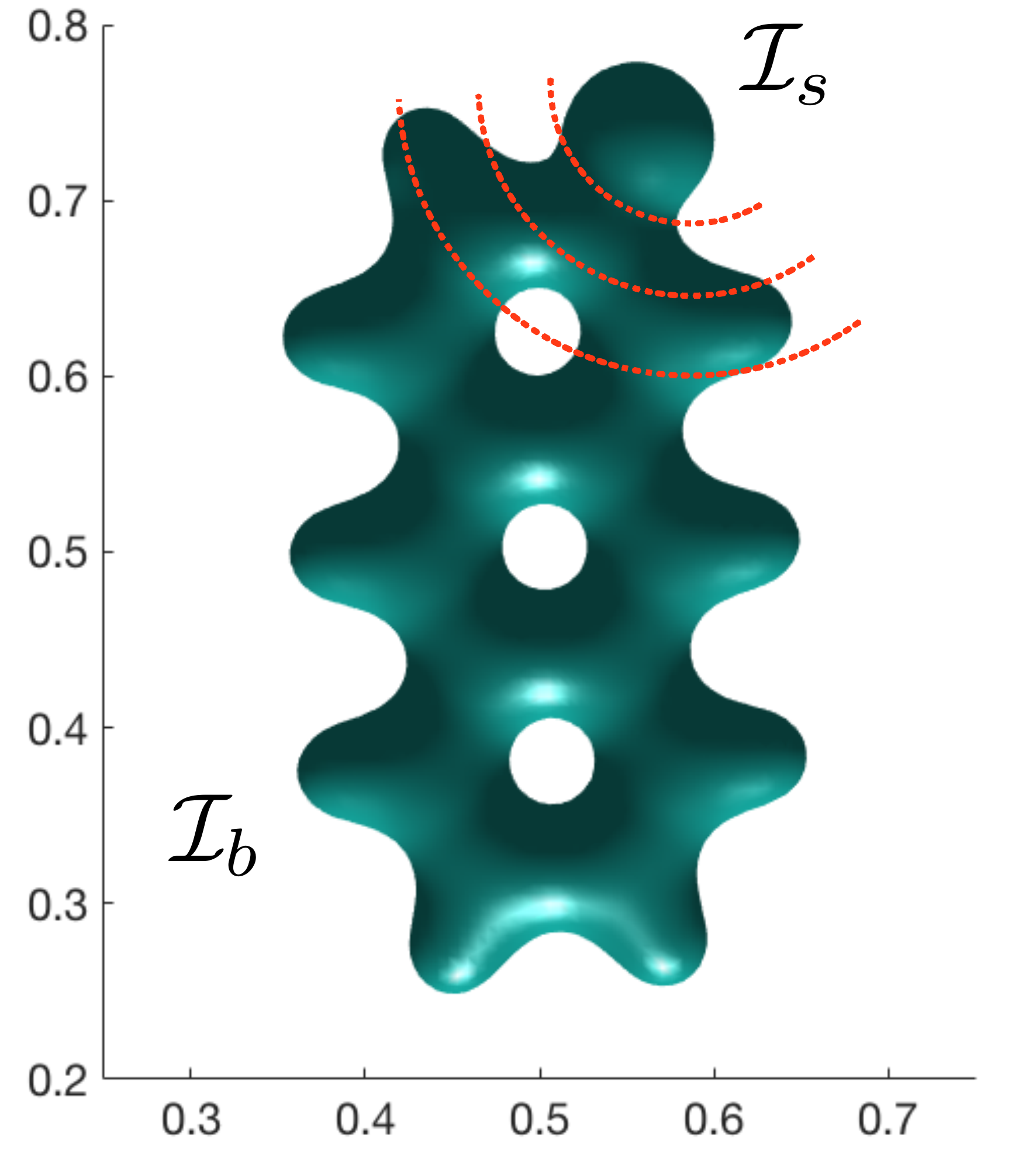}}
\caption{(a) Electron density for the anthracene molecule, and (b) the anthracene molecule with an hydrogen atom replaced by a fluorine.} \label{fig:rho_anthracene}
\end{figure}

\begin{table}
    \begin{center}
        \begin{tabular}{|c|c|c|c|c|}
        \hline
            $N_{s}$ & $ \frac{|| P - P^{\text{PET}}||}{||P||} $
                    & $ \frac{|| P - P^{\scriptsize{\mbox{pert}}} ||}{||P||}$
                    & $ \frac{||\rho^{\text{PET}} - \rho||}{||\rho||} $
                    & $ \frac{||\rho^{\scriptsize{\mbox{pert}}} - \rho||}{||\rho||} $ \\
            \hline
            1  & $8.65\times 10^{-2}$ &  $4.85\times 10^{-2}$ & $1.59\times 10^{-2}$ &  $1.50\times 10^{-2}$  \\
            3  & $6.10\times 10^{-2}$ &  $2.07\times 10^{-2}$ & $5.47\times 10^{-3}$ &  $7.18\times 10^{-3}$   \\
            5  & $5.01\times 10^{-2}$ &  $3.33\times 10^{-2}$ & $3.24\times 10^{-3}$ &  $3.25\times 10^{-3}$  \\
            7  & $4.42\times 10^{-2}$ &  $1.61\times 10^{-2}$ & $2.47\times 10^{-3}$ &  $1.91\times 10^{-3}$  \\
            9  & $3.31\times 10^{-2}$ &  $6.33\times 10^{-3}$ & $1.13\times 10^{-3}$ &  $9.75\times 10^{-4}$  \\
            11 & $2.88\times 10^{-2}$ &  $6.31\times 10^{-3}$ & $1.12\times 10^{-3}$ &  $9.04\times 10^{-4}$  \\
            13 & $2.86\times 10^{-2}$ &  $6.63\times 10^{-3}$ & $1.03\times 10^{-3}$ &  $7.90\times 10^{-4}$  \\
            15 & $1.83\times 10^{-2}$ &  $4.40\times 10^{-3}$ & $6.73\times 10^{-4}$ &  $5.68\times 10^{-4}$  \\
            19 & $1.73\times 10^{-2}$ &  $3.12\times 10^{-3}$ & $5.23\times 10^{-4}$ &  $4.13\times 10^{-4}$  \\
            \hline
        \end{tabular}
    \end{center}
    \caption{ Errors for the anthracene molecule for different bath (and system) sizes.}   \label{table:numerical_results_P_rho_anthracene}
    \vspace{-.3cm}
\end{table}

\begin{table}
    \begin{center}
        \begin{tabular}{|c|c|c|c|c|}
            \hline
            $N_{s}$  & $ E - E^{\text{PET}}$
                     & $ E - E^{\scriptsize{\mbox{pert}}}$
                     & $ F - F^{\text{PET}}$
                     & $ F - F^{\scriptsize{\mbox{pert}}}$  \\
            \hline
            1   &  $4.06\times 10^{-2}$  &  $1.33\times 10^{-2}$ & $2.69\times 10^{-2}$  & $3.16\times 10^{-2}$  \\
            3   &  $1.87\times 10^{-2}$  &  $3.42\times 10^{-3}$ & $1.73\times 10^{-2}$  & $9.98\times 10^{-3}$  \\
            5   &  $1.10\times 10^{-2}$  &  $1.97\times 10^{-3}$ & $7.72\times 10^{-3}$  & $6.78\times 10^{-3}$  \\
            7   &  $8.18\times 10^{-3}$  &  $2.02\times 10^{-4}$ & $3.73\times 10^{-3}$  & $9.79\times 10^{-4}$  \\
            9   &  $4.02\times 10^{-3}$  &  $3.16\times 10^{-5}$ & $4.05\times 10^{-3}$  & $3.31\times 10^{-4}$  \\
            11  &  $2.74\times 10^{-3}$  &  $2.53\times 10^{-5}$ & $4.42\times 10^{-3}$  & $2.64\times 10^{-4}$  \\
            13  &  $2.26\times 10^{-3}$  &  $6.35\times 10^{-5}$ & $4.53\times 10^{-3}$  & $2.15\times 10^{-4}$  \\
            15  &  $9.70\times 10^{-4}$  &  $1.15\times 10^{-5}$ & $1.84\times 10^{-3}$  & $2.83\times 10^{-4}$  \\
            19  &  $7.34\times 10^{-4}$  &  $5.16\times 10^{-6}$ & $1.62\times 10^{-3}$  & $1.02\times 10^{-4}$  \\
            \hline
        \end{tabular}
    \end{center}
    \caption{ Errors of the energy and forces for the anthracene molecule for different bath (and system) sizes.}   \label{table:numerical_results_E_F_anthracene}
    \vspace{-.3cm}
\end{table}


\section{Conclusion}\label{sec:conclusion}

We have studied the recently developed projection based embedding theory
(PET) from a mathematical perspective.  Viewed as a method to
approximately solve eigenvalue problems, PET solves a deflated
eigenvalue problem by taking into account the knowledge from a related
reference system.  This deflated eigenvalue problem can be derived from
the Euler-Lagrange equation of a standard energy minimization procedure
with respect to the density matrices, by with a non-standard constraint
on the \REV{feasible set}. From this perspective, the original formulation of PET
can be seen as a penalty method for imposing the constraint.  Numerical
examples for linear problems as well as nonlinear problems from
Kohn-Sham density functional theory calculations indicate that PET can
yield accurate approximation to the density matrix, energy and atomic
forces.  In order to further improve the accuracy of PET, we
developed a first order perturbation formula. We find that with the help
of the perturbative treatment, PET can achieve chemical accuracy
even for systems of relatively small sizes.

There are several immediate directions for future work. First, we have
studied PET when the system and bath are treated using the same level of
theory. From a physics perspective, it is more attractive to consider
the case when the system part is treated with a more accurate theory
than KSDFT with semi-local exchange-correlation functionals. In particular, it would be
interesting to understand PET when the system part is treated using
KSDFT with nonlocal functionals such as hybrid functionals, or
wavefunction theories such as the coupled cluster (CC) method. It is
also interesting to explore the PET in the context of solving
time-dependent problems. Second,
PET provides a size consistent alternative for many methods in quantum
physics and chemistry to be applied to solid state systems. Some
directions have already been pursued recently for using PET in the
context of periodic
systems~\cite{ChulhaiGoodpaster2018,LibischMarsmanBurgdorferEtAl2017}.
Third, the computation of the atomic force in PET is currently performed
using the finite difference formula, which is expensive in practice. It
would be desirable to develop a method with cost comparable to the
Hellmann-Feynman method but without significant sacrifice of the accuracy.
We note that there has been recent progress along this direction \cite{millerPETForcesArxiv}.
Finally, we believe that the asymptotic convergence property of PET is still dictated by the
nearsightedness principle for systems satisfying the gap condition, but
numerical results indicate that PET already achieves high accuracy even
for system sizes that are well below the prediction from localization
theories. Therefore it is worthwhile to further study the convergence
properties of PET, as well as to perform further comparison with linear scaling type
methods.

\section*{Acknowledgment}

This work was partially supported by the Department of Energy under
Grant No.~DE-SC0017867, No.~DE-AC02-05CH11231, the SciDAC program, and
by the Air Force Office of Scientific Research under award number
FA9550-18-1-0095 (L.~L. and L.~Z.-N.), and by the National Science
Foundation under Grant No. DMS-1652330 (L.~L.).  We thank the Berkeley
Research Computing (BRC) program at the University of California,
Berkeley for making computational resources available.  We thank Garnet
Chan and Frederick Manby for discussions, and Joonho Lee for
valuable suggestions and careful reading of the manuscript.

\bibliographystyle{siam}
\bibliography{pet}

\end{document}